
\documentclass[12pt,draftcls,onecolumn]{IEEEtran}

\hyphenation{Lyapunov Grammatico}

\usepackage{graphicx,color}
\usepackage{amsmath,amsthm,amssymb,amsfonts,booktabs}
\usepackage{bbm,dsfont}

\theoremstyle{plain}
\newtheorem{theorem}{Theorem}
\newtheorem{lemma}{Lemma}

\newtheorem{corollary}{Corollary}

\theoremstyle{definition}
\newtheorem{definition}{Definition}
\newtheorem{standing}{Standing Assumption}
\newtheorem{assumption}{Assumption}
\setcounter{assumption}{0}

\theoremstyle{remark}
\newtheorem{remark}{Remark}

\usepackage{array,subfigure,hyperref}


\newcommand{\R}{\mathbb{R}}
\newcommand{\Z}{\mathbb{Z}}

\newcommand{\X}{\mathbb{X}}
\newcommand{\U}{\mathbb{U}}

\newcommand{\mc}{\mathcal}

\begin{document}

\title{A scenario approach for \\ non-convex control design}
\author{Sergio Grammatico, Xiaojing Zhang, Kostas Margellos, \\ Paul Goulart and  John Lygeros%
\thanks{The authors are with the Automatic Control Laboratory, ETH Zurich, Switzerland. 
E-mail addresses: \{\texttt{grammatico}, \texttt{xiaozhan}, \texttt{margellos}, \texttt{pgoulart}, \texttt{jlygeros}\}\texttt{@control.ee.ethz.ch}.
}
}
\maketitle

\begin{abstract}
Randomized optimization is an established tool for control design with modulated robustness. While for uncertain convex programs there exist randomized approaches with efficient sampling, this is not the case for non-convex problems. Approaches based on statistical learning theory are applicable to non-convex problems, but they usually are conservative in terms of performance and require high sample complexity to achieve the desired probabilistic guarantees.
In this paper, we derive a novel scenario approach for a wide class of random non-convex programs, with a sample complexity similar to that of uncertain convex programs and with probabilistic guarantees that hold not only for the optimal solution of the scenario program, but for all feasible solutions inside a set of a-priori chosen complexity.
We also address measure-theoretic issues for uncertain convex and non-convex programs.
Among the family of non-convex control-design problems that can be addressed via randomization, we apply our scenario approach to randomized Model Predictive Control for chance-constrained nonlinear control-affine systems.
\end{abstract}

\section{Introduction}
Modern control design often relies on the solution of an optimization problem, for instance in robust control synthesis \cite{apkarian:tuan:00, zhou:doyle:glover:97}, Lyapunov-based optimal control \cite{bertsekas, beard:saridis:wen:97}, and Model Predictive Control (MPC) \cite{garcia:prett:morari:89, mayne:rawlings:rao:scokaert:00}.
In almost all practical control applications, the data describing the plant dynamics are \textit{uncertain}. The classic way of dealing with the uncertainty is the robust, also called min-max or worst-case, approach in which the control design has to satisfy the given specifications for \textit{all} possible realizations of the uncertainty. See \cite{bental:nemirovski:02} for an example of robust quadratic Lyapunov function synthesis for interval-uncertain linear systems. The worst-case approach is often formulated as a \textit{robust optimization} problem. However, even though certain classes of robust \textit{convex} problems are known to be computationally tractable \cite{bertsimas:sim:06}, robust convex programs are in general difficult to solve \cite{bental:nemirovski:98, bental:nemirovski:99}. Moreover, from an engineering perspective, robust solutions generally tend to be conservative in terms of performance.

To reduce the conservativism of robust solutions, stochastic programming \cite{prekopa, shapiro:dentcheva:ruszcynski} offers an alternative methodology. 
Unlike the worst-case approach, the constraints of the problem can be treated in a \textit{probabilistic} sense via chance constraints \cite{charnes:cooper:symonds:58, miller:wagner:65}, allowing for constraint violations with chosen low probability. 
The main issue of Chance Constrained Programs (CCPs) is that, without assumptions on the underlying probability distribution, they are in general intractable because multi-dimensional probability integrals must be computed. 

Among the class of chance constrained programs, uncertain \textit{convex} programs have received particular attention \cite{nemirovski:shapiro:04, nemirovski:shapiro:06}. Unfortunately, even for uncertain convex programs, the feasible set of a chance constraint is in general non-convex, which makes optimization under chance constraints problematic \cite[Section 1, pag. 970]{nemirovski:shapiro:06}.

An established and computationally-tractable approach to approximate chance constrained problems is the scenario approximation \cite{nemirovski:shapiro:04}. A feasible solution of the CCP is found with high confidence by solving an optimization problem, called Scenario Program (SP), subject to a finite number of randomly drawn constraints (scenarios). This scenario approach is particularly effective whenever it is possible to generate samples from the uncertainty, since it does not require any further knowledge on the underlying probability distribution. From a practical point of view, this is generally the case for many control-design problems where historical data and/or predictions are available.

The scenario approach for general uncertain (so called random) convex programs was first introduced in \cite{calafiore:campi:05}, and many control-design applications are outlined in \cite{calafiore:campi:06}. The fundamental contribution in these works is the characterization of the number of scenarios, i.e. the sample complexity, needed to guarantee that, with high confidence, the optimal solution of the SP is a feasible solution to the original CCP.
The sample complexity bound was further refined in \cite{campi:garatti:08} where it was shown to be tight for the class of fully-supported problems, and in \cite{calafiore:10, schildbach:fagiano:morari:13} where the concept of Helly's dimension \cite[Definition 3.1]{calafiore:10} and support dimension are introduced, respectively, to reduce the conservativism for non-fully-supported problems.
In \cite{calafiore:10, campi:garatti:11}, the possibility of removing sampled constraints (sampling and discarding) is considered to improve the cost function at the price of decreased feasibility.
It is indeed shown that if the constraints of the SP are removed optimally, then the solution of the SP approaches the actual optimal solution of the original CCP.

While feasibility, optimality and sample complexity of random convex programs are well characterized, to the best of the authors' knowledge, scenario approaches for random \textit{non-convex} programs are less developed. One family of methods comes from statistical learning theory, based on the Vapnik-–Chervonenkis (VC) theory \cite{vapnik:charvonenkis:71, anthony:biggs, vidyasagar:97}, and it is applicable to many non-convex control-design problems \cite{alamo:tempo:camacho:09, calafiore:dabbene:tempo:11, tempo:calafiore:dabbene}.
Contrary to scenario approximations of uncertain convex program \cite{campi:garatti:08, calafiore:10}, the aforementioned methods provide probabilistic guarantees for \textit{all feasible solutions} of the sampled program and not just for \textit{the optimal solution}. This feature is fundamental because the global optimizer of non-convex programs is not numerically computable in general.
Moreover, having probabilistic guarantees for all feasible solutions is of interest in many applications, for instance in \cite{calafiore:campi:elghaoui:02}.
However, the more general probabilistic guarantees of VC theory come at the price of an increased number of required random samples \cite{calafiore:campi:06}. More fundamentally, they depend on the so-called VC-dimension which is in general difficult to compute, or even infinite, in which case VC theory is not applicable \cite{calafiore:campi:05}. 

The aim of this paper is to propose a scenario approach for a wide class of random \textit{non-convex} programs, with moderate sample complexity, providing probabilistic guarantees for \textit{all feasible solutions} in a set of a-priori chosen complexity. In the spirit of \cite{calafiore:campi:05, calafiore:campi:06, campi:garatti:08, calafiore:10}, our results are only based on the ``decision complexity'', while \textit{no assumption} is made on the underlying probability structure. The main contributions of this paper with respect to the existing literature are summarized next.
\begin{itemize}
\item We formulate a scenario approach for the class of random non-convex programs with (possibly) non-convex cost, deterministic (possibly) non-convex constraints, and chance constraints containing functions with separable non-convexity.
For this class of programs, we show via a counterexample that the standard scenario approach is not directly applicable, because Helly's dimension (associated with the global optimum) can be unbounded.
This motivates the development of our technique.

\item We provide a sample complexity bound similar to the one of random convex programs for all feasible solutions in a set of a-priori chosen degree of complexity.

\item We apply our scenario approach methodology to random non-convex programs in the presence of mixed-integer decision variables, with graceful degradation of the associated sample complexity.

\item We apply our scenario approach to randomized Model Predictive Control for nonlinear control-affine systems with chance constraints.

\item We address the measure-theoretic issues regarding the measurability of the optimal value and optimal solutions of random (convex and non-convex) programs, including the well-definiteness of the probability integrals, under minimal mesurability assumptions.
\end{itemize}


The paper is structured as follows. Section \ref{sec:technical-background} presents the technical background and the problem statement. Section \ref{sec:main} presents the main results. Discussions and comparisons with existing methodologies are given in Section \ref{sec:discussion}. Section \ref{sec:control-applications} presents a scenario approach for randomized MPC of nonlinear control affine systems. We conclude the paper in Section \ref{sec:conclusion}. For ease of reading, the Appendices contain: the analytical example with unbounded Helly's dimension (Appendix \ref{app:example}), the technical proofs (Appendix \ref{app:proofs}), and the measure-theoretic results (Appendix \ref{app:measurability}).

\subsection*{Notation}
$\mathbb{R}$ and $\mathbb{Z}$ denote, respectively, the set of real and integer numbers. The notation $\mathbb{Z}[a,b]$ denotes the integer interval $\{a, a+1, ..., b \} \subseteq \Z$. The notation $\textup{conv}(\cdot)$ denotes the convex hull.

\section{Technical background and problem statement} \label{sec:technical-background}
We consider a Chance Constrained Program (CCP) with cost function $J: \R^n \rightarrow \R$, constraint function $g: \R^n \times \R^m \rightarrow \R$, constraint-violation tolerance $\epsilon \in (0,1)$, and admissible set $\mc{X} \subset \R^n$:
\begin{equation}\label{eq:CCP}
\textup{CCP}(\epsilon): \ \left\{ 
\begin{array}{l}
\displaystyle \min_{ x \in \mathcal{X} } \ J(x) \\
\textup{sub. to: } \mathbb{P} \left( \{ \delta \in \Delta \mid g(x,\delta) \leq 0 \}\right) \geq 1-\epsilon.
\end{array}
\right.
\end{equation}
In \eqref{eq:CCP}, $x \in \mc{X}$ is the decision variable and $\delta$ is a random variable defined on a probability space $\left(\Delta, \mc{F}, \mathbb{P}\right)$, with $\Delta \subseteq \R^m$. 
All the measure-theoretic arguments associated with the probability measure $\mathbb{P}$ are addressed in Appendix \ref{app:measurability}.

Throughout the paper, we make the following assumption, partially adopted from \cite[Assumption 1]{calafiore:10}.
\begin{standing}[Regularity] \label{ass:standing}
The set $\mc{X} \subset \R^n$ is compact and convex. For all $\delta \in \Delta \subseteq \R^m$, the mapping $x \mapsto g(x,\delta)$ is convex and lower semicontinuous. For all $x \in \R^n$, the mapping $\delta \mapsto g(x,\delta)$ is measurable.
The function $J$ is lower semicontinuous.
\qed
\end{standing}
The compactness assumption on $\mathcal{X}$, typical of any problem of practical interest, avoids technical difficulties by guaranteeing that any feasible problem instance attains a solution \cite[Section 3.1, pag. 3433]{calafiore:10}. The set $\mc{X} $ is assumed convex without loss of generality\footnote{If $\mathcal{X}$ is not convex, let $\mc{X}' \supset \mc{X}$ be a compact convex superset of $\mc{X}$. Then we can define the indicator function $\chi: \R^n \rightarrow \{ 0, \infty \}$, see \cite[Section 1.A, pag. 6--7]{rockafellar:wets} as $ \chi(x) := 0$ if $x \in \mc{X}$, $\infty$ otherwise.
Then we define the new cost function $J + \chi$, which is lower semicontinuous as well, and finally consider the \textsc{CCP} with convex feasibility set $\mc{X}'$ as
$\min_{ x \in \mc{X}' } \ J(x) + \chi(x) \ \textup{sub. to: } \mathbb{P} \left( \{ \delta \in \Delta \mid g(x,\delta) \leq 0 \} \right) \geq 1-\epsilon$.}.
Measurability of $g(x,\cdot)$ and lower semicontinuity of $J$ are needed to avoid potential measure-theoretic issues, see Appendix \ref{app:measurability} for technical details.

\begin{remark}\label{rem:general-CCP}
The \textsc{CCP} formulation in \eqref{eq:CCP} implicitly includes the more general \textsc{CCP}
\begin{equation}\label{eq:CCP-general}
{\textup{CCP}}'(\epsilon): \left\{ 
\begin{array}{l}
\displaystyle \min_{ x \in \mathcal{X} } \ J(x) \\
\begin{array}{ll}
\textup{sub. to:} & \mathbb{P} \left( \left\{ \delta \in \Delta \mid g(x,\delta) + f(x) \varphi(\delta) \leq 0 \right\} \right) \geq 1-\epsilon\\
                          & h(x) \leq 0,
\end{array}
\end{array}
\right.
\end{equation}
for possibly non-convex functions $f, h: \R^n \rightarrow \R$, $\varphi: \R^m \rightarrow \R$, at the only price of introducing one extra variable\footnote{We follow the lines of \cite[Section 1.A, pag. 6--7]{rockafellar:wets}. The probabilistic constraint becomes $\mathbb{P}\left( \left\{ \delta \in \Delta \mid g(x,\delta) + y \varphi(\delta) \leq 0 \right\}\right) \geq 1 - \epsilon$, while the deterministic constraint becomes $\max\{ h(x), |y - f(x)|\} \leq 0$. We now define the indicator function $\chi: \mc{X} \times \R \rightarrow \{ 0, \infty \}$ as $\chi(x,y) := 0$ if $\max\{ h(x), | y - f(x) | \} \leq 0$, $\infty$ otherwise, in order to get a \textsc{CCP} formulation as in \eqref{eq:CCP}. \\
More generally, we allow for ``separable'' probabilistic constraint of the kind
$ \textstyle \mathbb{P}\left( \left\{ \delta \in \Delta \mid \phi \left( g(x,\delta) + \sum_{i} f_i(x) \varphi_i(\delta) \right) \leq 0  \right\} \right) \geq 1-\epsilon,$
for convex $\phi: \R^{ p \times q } \rightarrow \R$, and possibly non-convex functions $f_i: \R^n \rightarrow \R$.} $y = f(x) \in \mc{Y} := f(\mc{X})$.
{\hfill $\square$}
\end{remark}

It is important to notice that unlike the standard setting of random convex programs \cite{calafiore:campi:05}, we allow the cost function $J$ to be non-convex. Since our results presented later on provide probabilistic guarantees for an \textit{entire} set, rather than for a \textit{single} point, we next give the set-based counterpart of \cite[Definitions 1, 2]{calafiore:campi:06}.
\begin{definition}[Probability of Violation and Feasibility of a Set] \label{def:violation-probability}
The \textit{probability of violation} of a set $\X \subseteq \mathcal{X}$ is defined as 
\begin{equation}\label{eq:violation-probability}
V( \X ) := \sup_{x \in \X} \ \mathbb{P}\left( \{ \delta \in \Delta \mid g(x,\delta)>0 \}\right).
\end{equation}
For any given $\epsilon \in (0,1)$, a set $\X \subseteq \mc{X}$ is \textit{feasible} for $\textup{CCP}(\epsilon)$ in \eqref{eq:CCP} if $V(\X) \leq \epsilon.$
\qed
\end{definition}

In view of Definition \ref{def:violation-probability}, which accounts for the worst-case violation probability on an entire set,
our developments are partially inspired by the following key statement regarding the violation probability of the convex hull of a given set.
\begin{theorem} \label{th:violation-convex-hull}
For given $\X \subseteq \R^n$ and $\epsilon \in (0,1)$, if $V\left(\X\right) \leq \epsilon$, then $V\left( \textup{conv}\left( \X\right) \right) \leq (n+1) \epsilon$.
\hfill $\square$
\end{theorem}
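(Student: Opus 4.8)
The plan is to reduce the statement to a union bound by combining the convexity of $x \mapsto g(x,\delta)$ granted by Standing Assumption~\ref{ass:standing} with Carathéodory's theorem. First I would fix an arbitrary point $z \in \textup{conv}(\X)$ and invoke Carathéodory's theorem: since $\X \subseteq \R^n$, the point $z$ admits a representation as a convex combination $z = \sum_{i=0}^{n} \lambda_i x_i$ of at most $n+1$ points $x_0, \dots, x_n \in \X$, with $\lambda_i \geq 0$ and $\sum_{i=0}^n \lambda_i = 1$. The factor $n+1$ in the conclusion is precisely this Carathéodory bound, so this step is where the dimension-dependent constant originates.

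Next I would transfer feasibility of the individual points to $z$ pointwise in $\delta$. For a fixed $\delta \in \Delta$, if $g(x_i,\delta) \leq 0$ for every $i \in \{0,\dots,n\}$, then convexity of $g(\cdot,\delta)$ gives
\[
g(z,\delta) = g\Big( \textstyle\sum_{i=0}^n \lambda_i x_i, \delta \Big) \leq \sum_{i=0}^n \lambda_i\, g(x_i,\delta) \leq 0 .
\]
Taking the contrapositive, the violation event at $z$ is contained in the union of the violation events at the representing points:
\[
\{ \delta \in \Delta \mid g(z,\delta) > 0 \} \subseteq \bigcup_{i=0}^n \{ \delta \in \Delta \mid g(x_i,\delta) > 0 \} .
\]

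From here I would apply subadditivity of $\mathbb{P}$ together with the hypothesis $V(\X) \leq \epsilon$, which by Definition~\ref{def:violation-probability} means $\mathbb{P}(\{ \delta \mid g(x,\delta) > 0 \}) \leq \epsilon$ for every $x \in \X$. The set inclusion then yields
\[
\mathbb{P}\big( \{ \delta \mid g(z,\delta) > 0 \} \big) \leq \sum_{i=0}^n \mathbb{P}\big( \{ \delta \mid g(x_i,\delta) > 0 \} \big) \leq (n+1)\epsilon .
\]
Since $z \in \textup{conv}(\X)$ was arbitrary and the bound $(n+1)\epsilon$ does not depend on $z$, taking the supremum over $\textup{conv}(\X)$ gives $V(\textup{conv}(\X)) \leq (n+1)\epsilon$, as claimed.

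I do not expect a genuine analytic obstacle, since the argument is elementary once convexity and Carathéodory are put together. The only points demanding care are measure-theoretic: the violation events $\{ \delta \mid g(x_i,\delta) > 0 \}$ must be measurable for subadditivity to apply, which is guaranteed by the measurability of $\delta \mapsto g(x,\delta)$ in Standing Assumption~\ref{ass:standing}, while the well-definedness of the probabilities is addressed in Appendix~\ref{app:measurability}. A secondary subtlety is that the representing points $x_i$ depend on the chosen $z$; however, because the resulting bound is uniform in $z$, passing to the supremum over $\textup{conv}(\X)$ presents no difficulty.
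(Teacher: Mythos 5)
Your proof is correct and follows essentially the same route as the paper's: Carath\'eodory's theorem to write an arbitrary point of $\textup{conv}(\X)$ as a convex combination of $n+1$ points of $\X$, convexity of $x \mapsto g(x,\delta)$ to include the violation event of that point in the union of the violation events of the representing points, and a union bound to obtain the factor $(n+1)\epsilon$. The only difference is presentational (you phrase the middle step as an event inclusion via the contrapositive, while the paper writes it as a chain of probability inequalities through $\max_i \alpha_i g(x_i,\delta)$), which is mathematically the same argument.
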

To the best of our knowledge this basic fact has not been exploited in the literature. An immediate consequence of Theorem \ref{th:violation-convex-hull} is that the feasibility set \\ $\mc{X}_{\epsilon} := \left\{ x \in \mc{X} \mid \mathbb{P}\left( \left\{ \delta \in \Delta \mid g(x,\delta)\leq 0 \right\} \right) \geq 1-\epsilon \right\}$ of $\textup{CCP}(\epsilon)$ in \eqref{eq:CCP} satisfies
$$\mc{X}_{\epsilon} \subseteq \textup{conv}\left( \mc{X}_{\epsilon}\right) \subseteq \mc{X}_{(n+1) \epsilon}.$$

Associated with $\textup{CCP}(\epsilon)$ in \eqref{eq:CCP}, we consider a Scenario Program ({SP}) obtained from $N$ independent and identically distributed (i.i.d.) samples $\bar\delta^{(1)}, \bar\delta^{(2)}, ..., \bar\delta^{(N)}$ drawn according to $\mathbb{P}$ \cite[Definition 3]{calafiore:campi:05}.
For a fixed multi-sample $\bar\omega := \left( \bar\delta^{(1)}, \bar\delta^{(2)}, ..., \bar\delta^{(N)}\right) \in \Delta^N$, we consider the SP
\begin{equation}\label{eq:SP}
\textsc{SP}[\bar\omega]: \ \left\{
\begin{array}{l}
\displaystyle \min_{ x \in \mc{X} } \ J(x) \\
\textup{sub. to: }  g(x,\bar\delta^{(i)}) \leq 0 \ \ \forall i \in \Z{[1,N]}.
\end{array}
\right.
\end{equation}

\subsection{Known results on scenario approximations of chance constraints}

In \cite{campi:garatti:08,calafiore:10}, the case $J(x) := c^\top x$ is considered. Under the assumption that, for every multi-sample, the optimizer is unique \cite[Assumption 1]{campi:garatti:08}, \cite[Assumption 2]{calafiore:10} or a suitable tie-breaking rule is adopted \cite[Section 4.1]{calafiore:campi:05} \cite[Section 2.1]{campi:garatti:08}, the optimizer mapping $x^\star(\cdot): \Delta^N \rightarrow \mc{X}$ of $\textsc{SP}[\cdot]$ is such that
\begin{equation} \label{eq:N-Campi-Calafiore}
\mathbb{P}^N \left(  \left\{ \omega \in \Delta^N \mid   V( \{ x^\star(\omega) \} ) > \epsilon \right\} \right) \leq \Phi( \epsilon, n, N ) := \sum_{j=0}^{n-1} { N \choose j } \epsilon^j ( 1 - \epsilon)^{N-j}.
\end{equation}
The above bound is tight for fully-supported problems \cite[Theorem 1, Equation (7)]{campi:garatti:08}, while for non-fully-supported problems it can be improved by replacing $n$ with the so-called Helly's dimension $\zeta \leq n$ \cite[Theorem 3.3]{calafiore:10}. To satisfy the implicit bound \eqref{eq:N-Campi-Calafiore} with right-hand side equal to $\beta \in (0,1)$, it is sufficient to select a sample size \cite[Corollary 5.1]{calafiore:10}, \cite[Lemma 2]{alamo:tempo:luque:ramirez:13}
\begin{equation} \label{eq:N-Campi-Calafiore-1}
N \geq \frac{ \frac{e}{e-1} }{ \epsilon } \left( \zeta - 1 + \textup{ln}\left( \frac{1}{\beta} \right) \right).
\end{equation}
We emphasize that the inequality \eqref{eq:N-Campi-Calafiore} holds only for the probability of violation of the singleton mapping $x^\star(\cdot)$.

Although the only explicit difference between the SP in \eqref{eq:SP} and convex SPs (i.e. with $J(x) := c^\top x$) is the possibly non-convex cost $J$, we show in Appendix \ref{app:example} that Helly's dimension $\zeta$ for the globally optimal value of SP in \eqref{eq:SP} can in general be unbounded. Therefore even for the apparently simple non-convex SP in \eqref{eq:SP} it is impossible to directly apply the classic scenario approach \cite{calafiore:campi:05, calafiore:campi:06, campi:garatti:08, calafiore:10} based on Helly's theorem \cite{floyd:warmuth:95, levin:69}.

For general non-convex programs, VC theory provides upper bounds for the quantity \\
$\mathbb{P}^N \left(  \left\{ \omega \in \Delta^N \mid   V( \X(\omega) ) > \epsilon \right\} \right),$
where $\X(\omega) \subseteq \mc{X}$ is the entire feasible set of $\textsc{SP}[\omega]$, see the discussions in \cite[Section 3.2]{erdogan:iyengar:06}, \cite[Sections IV, V]{alamo:tempo:camacho:09}.

In particular, \cite[Theorem 8.4.1]{anthony:biggs} shows that it suffices to select a sample size
\begin{equation} \label{eq:N-Anthony-Biggs}
N_{\textup{VC}} \geq \frac{4}{ \epsilon } \left( \xi \ \textup{ln}\left( \frac{12}{\epsilon}\right) + \textup{ln}\left( \frac{2}{\beta} \right) \right),
\end{equation}
to guarantee with confidence $1 - \beta$ that any feasible solution of $\textsc{SP}[\bar\omega]$ has probability of violation no larger than $\epsilon$. In \eqref{eq:N-Anthony-Biggs}, $\xi$ is the so-called VC dimension \cite[Definition 10.2]{tempo:calafiore:dabbene}, which encodes the richness of the family of functions $\left\{ \delta \mapsto g(x,\delta) \mid x \in \mc{X} \right\}$ and may be hard to estimate, or even infinite.

\section{Random Non-Convex Programs: \\ Probabilistic Guarantees for an Entire Set} \label{sec:main}

\subsection{Main results} \label{sec:main-results}
We start with a preliminary intuitive statement. We consider a finite number of mappings $x_1^\star, x_2^\star, ..., x_M^\star : \Delta^N \rightarrow \mc{X}$, each one with probabilistic guarantees, and we upper bound their worst-case probability of violation.
\begin{assumption}\label{ass:many-xis}
For given $\epsilon \in (0,1)$, the mappings $x_1^\star, x_2^\star, ..., x_M^\star : \Delta^N \rightarrow \mc{X}$ are such that, for all $k \in \Z{[1,M]}$, we have
$\displaystyle \mathbb{P}^N \left(  \left\{ \omega \in \Delta^N \mid   V( \{ x_k^\star(\omega) \} ) > \epsilon \right\} \right) \leq \beta_k \in (0,1).$
\qed
\end{assumption}

\begin{lemma}\label{lem:many-xis}
Consider the $\textsc{SP}[\bar\omega]$ in \eqref{eq:SP} with $N \geq n$. If Assumption \ref{ass:many-xis} holds, then \\
$\mathbb{P}^N \left(  \left\{ \omega \in \Delta^N \mid   V\left( \left\{ x_1^\star(\omega), x_2^\star(\omega), ..., x_M^\star(\omega) \right\} \right) > \epsilon \right\} \right) 
\ \leq \ \sum_{k=1}^M \beta_k.$
\qed
\end{lemma}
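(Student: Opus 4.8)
The plan is to reduce the set-based violation event to a finite union of single-point violation events and then invoke the union bound. The starting observation is that, by Definition \ref{def:violation-probability}, the probability of violation $V(\cdot)$ of a set is a supremum over its elements. For a \emph{finite} set the supremum is attained and coincides with a maximum, so for every fixed $\omega \in \Delta^N$,
$$
V\left(\left\{x_1^\star(\omega), \dots, x_M^\star(\omega)\right\}\right)
= \max_{k \in \Z{[1,M]}} \mathbb{P}\left(\left\{\delta \in \Delta \mid g(x_k^\star(\omega), \delta) > 0\right\}\right)
= \max_{k \in \Z{[1,M]}} V\left(\left\{x_k^\star(\omega)\right\}\right).
$$

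Next I would translate this identity into an identity of events. Since a maximum over finitely many terms exceeds $\epsilon$ precisely when at least one of the terms does, the bad event decomposes as
$$
\left\{\omega \in \Delta^N \mid V\left(\left\{x_1^\star(\omega), \dots, x_M^\star(\omega)\right\}\right) > \epsilon\right\}
= \bigcup_{k=1}^M \left\{\omega \in \Delta^N \mid V\left(\left\{x_k^\star(\omega)\right\}\right) > \epsilon\right\}.
$$
I would then apply subadditivity (the union bound) of the product measure $\mathbb{P}^N$ and bound each of the $M$ terms using Assumption \ref{ass:many-xis}, which gives $\mathbb{P}^N\left(\left\{\omega \mid V(\{x_k^\star(\omega)\}) > \epsilon\right\}\right) \leq \beta_k$ for every $k$. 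Summing over $k$ yields the claimed bound $\sum_{k=1}^M \beta_k$.

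The main — indeed essentially the only — delicate point is measure-theoretic rather than combinatorial: the union-bound step requires each event $\left\{\omega \mid V(\{x_k^\star(\omega)\}) > \epsilon\right\}$, and hence their finite union, to be $\mc{F}^N$-measurable, and it requires the inner probability $\mathbb{P}\left(\left\{\delta \mid g(x_k^\star(\omega), \delta) > 0\right\}\right)$ to be well-defined for $\mathbb{P}^N$-almost every $\omega$. Both facts are already implicit in the hypothesis, since Assumption \ref{ass:many-xis} assigns a well-defined probability to each single-point event; under the Standing Assumption they are established rigorously by the measure-theoretic results in Appendix \ref{app:measurability}. The role of the condition $N \geq n$ is only to ensure that the scenario program $\textsc{SP}[\bar\omega]$, and thus each mapping $x_k^\star$, is well-posed; it plays no part in the union-bound argument itself.
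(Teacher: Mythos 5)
Your proposal is correct and follows essentially the same route as the paper's proof: the violation event for the finite set is rewritten as the union of the $M$ singleton violation events, the union bound is applied, and each term is bounded by $\beta_k$ via Assumption \ref{ass:many-xis}. Your additional remarks on measurability and on the role of $N \geq n$ are sound supplements but not part of the paper's (one-line) argument, which treats those points as handled elsewhere (Appendix \ref{app:measurability}).
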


For instance, each $x_i^\star$ may be the optimizer mapping of a convex SP and hence satisfy \eqref{eq:N-Campi-Calafiore} according to \cite{campi:garatti:08, calafiore:10}. In such a case, we get that with probability no smaller than $1 - M \beta$, the set $\left\{ x_1^\star(\omega), x_2^\star(\omega), ..., x_M^\star(\omega) \right\}$ is feasible with respect to Definition \ref{def:violation-probability}, i.e., \\
$\mathbb{P}^N \left(\left\{ \omega \in \Delta^N \mid V\left(  \left\{ x_1^\star(\omega), x_2^\star(\omega), ..., x_M^\star(\omega) \right\} \right) \leq \epsilon \right\} \right) \geq 1-M \beta$.

We may consider the meaning of Lemma \ref{lem:many-xis} in view of the result in \cite[Section 4.2]{anthony:biggs}, and similarly in \cite[Section 4.2]{alamo:tempo:luque:ramirez:13}, where the decision variable $x$ lives in a set $\mc{X}$ of finite cardinality. The main difference here is that Lemma \ref{lem:many-xis} instead relies on a finite number of {mappings} $x_k^\star(\cdot)$, rather than on a finite number of decisions. Each of these mappings is associated with a given upper bound $\beta_k$ on the probability of violating the chance constraint.


We now proceed to our main idea.
We address the \textsc{CCP}($\epsilon$) in \eqref{eq:CCP} through a family of $M$ distinct convex \textsc{SP}s, each with Helly's dimension bounded by some integer $\zeta \in \Z{[1,n]}$.
We consider $M$ cost vectors $c_1, c_2, ..., c_M \in \R^n$, and for each $k \in \Z{[1,M]}$, we define
the following \textsc{SP}, where $\bar\omega := \left( \bar\delta^{(1)}, \bar\delta^{(2)}, ..., \bar\delta^{(N)} \right)$.
\begin{equation}\label{eq:SPk}
\textsc{SP}_k[\bar\omega]: \ \left\{
\begin{array}{l}
\displaystyle \min_{ x \in \mc{C}_k \cap \mc{X} } \ c_k^\top x \\
\textup{sub. to: } g(x,\bar\delta^{(i)}) \leq 0 \ \ \forall i \in \Z{[1,N]} \\
\end{array}
\right.
\end{equation}
The additional convex constraint $x \in \mc{C}_k \subseteq \R^n$ allows to upper bound Helly's dimension by some $\zeta \in \Z[1,n]$, and its choice is hence discussed later on.
Let $x_k^\star(\bar\omega)$ be the optimizer of $\textsc{SP}_k[\bar\omega]$ and assume that it is unique, or a suitable tie-break rule is considered \cite[Section 4.1]{calafiore:campi:05}. We adopt the convention that $x_k^\star(\bar\omega) := \varnothing$ if $\textsc{SP}_k[\bar\omega]$ is not feasible. We notice that if $\textsc{SP}_k[\bar\omega]$ is feasible then the optimizer $x_k^\star(\bar\omega)$ is always finite due to the compactness assumption on $\mathcal{X}$.

For all $\omega \in \Delta^N$, let us consider the convex-hull set 
\begin{equation}\label{eq:XM}
{\X}_M(\omega) := \textup{conv}\left( \{ x_1^\star(\omega), x_2^\star(\omega), ..., x_M^\star(\omega)  \}\right),
\end{equation}
where, for all $k \in \Z{[1,M]}$, $x_k^\star(\cdot)$ is the optimizer mapping of $\textsc{SP}_k[\cdot]$ in \eqref{eq:SPk}. The construction of $\mathbb{X}_M$ is illustrated in Figure \ref{fig:non-convex-scenario}.

\begin{figure}
\begin{center}
\includegraphics[width = 0.5  \columnwidth]{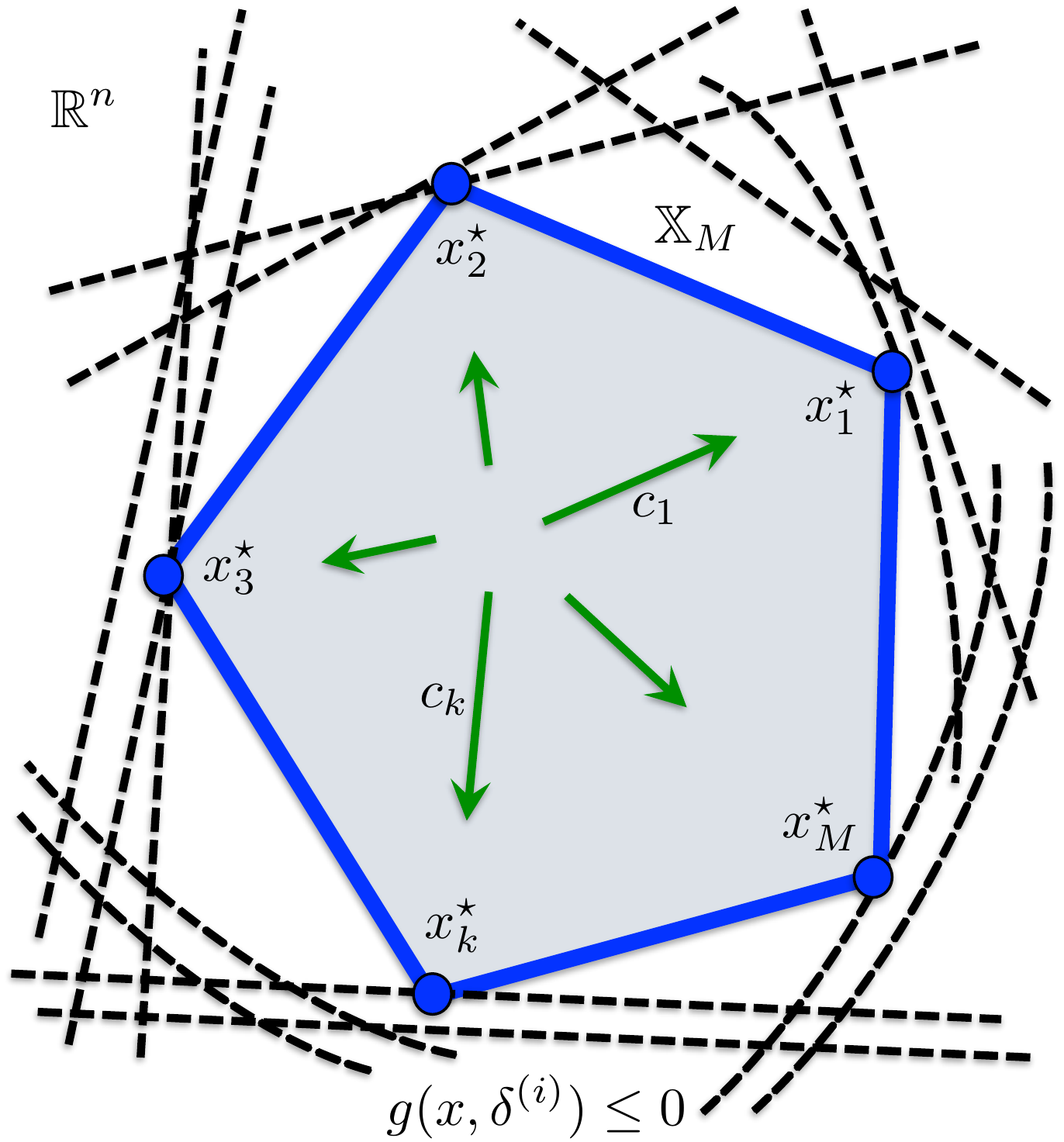}
\end{center}
\caption{The set $\X_M$ is the convex hull of the points $x_1$, $x_2$, ..., $x_M$, where each $x_k$ is the optimizer of $\textup{SP}_k$ in \eqref{eq:SPk} having linear cost $c_k^\top x$.}
\label{fig:non-convex-scenario}
\end{figure}

We are now ready to state our main result.
\begin{theorem}\label{th:bar-Phi}
For each $k \in \Z{[1,M]}$, let $x_k^\star$ and $\zeta$ be, respectively, the optimizer mapping and an upper bound for Helly's dimension of \textsc{SP}$_k$ in \eqref{eq:SPk}, and let ${\X}_M$ be as in \eqref{eq:XM}. Then, for all $\epsilon \in (0,1)$,
\begin{equation} \label{eq:bar-Phi-M}
\mathbb{P}^N\left( \left\{ \omega \in \Delta^N \mid V( \X_M(\omega) ) > \epsilon \right\} \right) \leq M \Phi\left( \frac{\epsilon}{ \min\{n+1, M\} }, \zeta, N \right).
\end{equation}
\qed
\end{theorem}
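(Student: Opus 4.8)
The plan is to split the argument into a deterministic, purely geometric estimate on the violation probability of the convex hull of finitely many points, and a probabilistic estimate obtained by applying the classical scenario bound to each of the $M$ convex programs in \eqref{eq:SPk} separately. Throughout, write $p := \min\{n+1,M\}$ and let $S_M(\omega) := \{x_1^\star(\omega),\dots,x_M^\star(\omega)\}$ be the finite set of optimizers, so that $\X_M(\omega)=\textup{conv}(S_M(\omega))$ by \eqref{eq:XM}. Note $\epsilon/p\in(0,1)$ since $\epsilon\in(0,1)$ and $p\ge1$.

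First I would establish the deterministic refinement $V(\X_M(\omega))\le p\,V(S_M(\omega))$. The key is that every $x\in\textup{conv}(S_M(\omega))$ can, by Carath\'eodory's theorem applied inside the affine hull of the $M$ optimizers (whose dimension is at most $\min\{n,M-1\}$), be written as a convex combination $x=\sum_{j}\lambda_j\,x_{k_j}^\star(\omega)$ of at most $p$ of the points. Convexity of $x\mapsto g(x,\delta)$ (Standing Assumption \ref{ass:standing}) then gives $g(x,\delta)\le\sum_{j}\lambda_j\,g(x_{k_j}^\star(\omega),\delta)$, so $g(x,\delta)>0$ forces $g(x_{k_j}^\star(\omega),\delta)>0$ for at least one $j$; hence the violation set of $x$ is contained in the union of the violation sets of the $p$ selected vertices. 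A union bound over these $p$ terms yields $\mathbb{P}(\{\delta\in\Delta\mid g(x,\delta)>0\})\le p\,\max_{k}V(\{x_k^\star(\omega)\})=p\,V(S_M(\omega))$, and taking the supremum over $x\in\X_M(\omega)$ proves the claim. This is exactly the $M$-point sharpening of Theorem \ref{th:violation-convex-hull}, with the ambient factor $n+1$ replaced by $p$.

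Next I would handle the randomness. Each $\textsc{SP}_k$ in \eqref{eq:SPk} is a convex program with linear cost and Helly's dimension bounded by $\zeta$, so the refined scenario bound \eqref{eq:N-Campi-Calafiore} (with $n$ replaced by $\zeta$) gives, at level $\epsilon/p$, the estimate $\mathbb{P}^N(\{\omega\mid V(\{x_k^\star(\omega)\})>\epsilon/p\})\le\Phi(\epsilon/p,\zeta,N)$ for every $k\in\Z{[1,M]}$. This is precisely Assumption \ref{ass:many-xis} with $\beta_k=\Phi(\epsilon/p,\zeta,N)$, so Lemma \ref{lem:many-xis} (equivalently, a direct union bound over $k$) gives $\mathbb{P}^N(\{\omega\mid V(S_M(\omega))>\epsilon/p\})\le M\,\Phi(\epsilon/p,\zeta,N)$. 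Finally, the deterministic step supplies the inclusion $\{\omega\mid V(\X_M(\omega))>\epsilon\}\subseteq\{\omega\mid V(S_M(\omega))>\epsilon/p\}$ --- the contrapositive of $V(S_M(\omega))\le\epsilon/p\Rightarrow V(\X_M(\omega))\le\epsilon$ --- and combining the two displayed estimates delivers \eqref{eq:bar-Phi-M}.

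I expect the main obstacle to be the deterministic step, specifically the sharpening of the factor from $n+1$ to $\min\{n+1,M\}$: one must invoke Carath\'eodory's theorem within the affine hull of the finitely many optimizers rather than in $\R^n$, and verify that restricting the convex map $g(\cdot,\delta)$ to that affine subspace preserves the convexity used in the violation-set inclusion. A secondary, technical point, deferred to the measurability results in Appendix \ref{app:measurability}, is to confirm that $\omega\mapsto V(\{x_k^\star(\omega)\})$ and $\omega\mapsto V(\X_M(\omega))$ are measurable, so that the probabilities in the above inclusions are well defined.
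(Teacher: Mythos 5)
Your proposal is correct and follows essentially the same route as the paper's proof: Carath\'eodory's theorem, convexity of $x \mapsto g(x,\delta)$, and a union bound reduce the violation of $\X_M(\omega)$ to that of the finite vertex set at level $\epsilon/\min\{n+1,M\}$, after which the classical scenario bound for each $\textsc{SP}_k$ (with Helly's dimension bounded by $\zeta$) and the union bound of Lemma \ref{lem:many-xis} yield the factor $M\,\Phi\left(\epsilon/\min\{n+1,M\},\zeta,N\right)$. The only difference is organizational: you establish the deterministic inequality $V(\X_M(\omega)) \leq \min\{n+1,M\}\,\max_k V(\{x_k^\star(\omega)\})$ pointwise over the hull and then pass to the supremum, which lets you dispense with the paper's $\epsilon'$-near-maximizer $\xi_M^\star(\omega)$ and the concluding limit $\epsilon' \to 0$ that invokes continuity of $\epsilon \mapsto \Phi(\epsilon,\zeta,N)$ --- a mild but genuine streamlining of the same argument.
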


Following the lines of \cite[Appendix, Proof of Theorem 2]{calafiore:lyons:13}, we can also slightly improve the bound of Theorem \ref{th:bar-Phi} as follows.
\begin{corollary} \label{cor:bar-Phi-best}
For each $k \in \Z{[1,M]}$, let $x_k^\star$ and $\zeta$ be, respectively, the optimizer mapping and Helly's dimension of \textsc{SP}$_k$ in \eqref{eq:SPk}, and let ${\X}_M$ be as in \eqref{eq:XM}. Then, for all $\epsilon \in (0,1)$,
\begin{equation} \label{eq:bar-Phi-M}
\mathbb{P}^N\left( \left\{ \omega \in \Delta^N \mid V( \X_M(\omega) ) > \epsilon \right\} \right) \leq { M \choose n+1 } \Phi\left( \epsilon, \zeta \min\{n+1, M\} , N \right).
\end{equation}
\qed
\end{corollary}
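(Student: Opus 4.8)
The plan is to sharpen the union bound underlying Theorem \ref{th:bar-Phi} by grouping the optimizers into subsets of size $m := \min\{n+1,M\}$ and treating each group as the optimizer of a single ``stacked'' convex scenario program, instead of controlling the violation of every $x_k^\star$ in isolation. The starting point is a consequence of Carath\'eodory's theorem in $\R^n$: every point of $\X_M(\omega) = \textup{conv}(\{x_1^\star(\omega),\dots,x_M^\star(\omega)\})$ is a convex combination of at most $m$ of the generators, so
$$\X_M(\omega) = \bigcup_{S} \textup{conv}\left( \{ x_k^\star(\omega) \mid k \in S \} \right),$$
where $S$ ranges over the $\binom{M}{m}$ subsets of $\Z[1,M]$ of cardinality $m$. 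Since $V(\cdot)$ in \eqref{eq:violation-probability} is a supremum, it distributes over unions, giving $V(\X_M(\omega)) = \max_{S} V(\textup{conv}(\{x_k^\star(\omega)\mid k\in S\}))$. A union bound over the $\binom{M}{m}$ subsets therefore reduces the claim to bounding, for a single fixed $S$, the probability that $V(\textup{conv}(\{x_k^\star\mid k\in S\})) > \epsilon$.

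For a fixed $S = \{k_1,\dots,k_m\}$ I would introduce the stacked decision variable $\mathbf{x} = (x_{k_1},\dots,x_{k_m}) \in \R^{nm}$, the convex constraint $G(\mathbf{x},\delta) := \max_{j} g(x_{k_j},\delta)$ (a maximum of convex, lower semicontinuous functions, hence again convex and lower semicontinuous), and the scenario program minimizing the separable linear cost $\sum_j c_{k_j}^\top x_{k_j}$ over $\prod_j (\mc{C}_{k_j}\cap\mc{X})$ subject to $G(\mathbf{x},\bar\delta^{(i)})\le 0$ for all $i$. Because both cost and constraints decouple across the $m$ blocks, this stacked program satisfies Standing Assumption \ref{ass:standing}, its (tie-broken) optimizer is exactly $(x_{k_1}^\star,\dots,x_{k_m}^\star)$, and its support constraints are the union of the support constraints of the $m$ individual programs; hence its Helly's dimension is at most $m\zeta = \zeta\min\{n+1,M\}$. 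Applying the scenario bound \eqref{eq:N-Campi-Calafiore} together with the Helly's-dimension refinement of \cite[Theorem 3.3]{calafiore:10} to this single program yields
$$\mathbb{P}^N\left( \left\{ \omega \mid \mathbb{P}\left( \{ \delta \mid G(\mathbf{x}^\star(\omega),\delta) > 0 \} \right) > \epsilon \right\} \right) \le \Phi\left(\epsilon,\ \zeta\min\{n+1,M\},\ N\right).$$

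It then remains to connect this event to the violation of the convex hull. For any $x\in\textup{conv}(\{x_{k_j}^\star\mid j\})$, convexity of $g(\cdot,\delta)$ gives $g(x,\delta)\le \max_j g(x_{k_j}^\star,\delta) = G(\mathbf{x}^\star,\delta)$, whence $\{\delta\mid g(x,\delta)>0\}\subseteq\{\delta\mid G(\mathbf{x}^\star,\delta)>0\}$ uniformly in $x$; taking probabilities and then the supremum over $x$ gives $V(\textup{conv}(\{x_{k_j}^\star\mid j\}))\le \mathbb{P}(\{\delta\mid G(\mathbf{x}^\star,\delta)>0\})$. Combining this domination with the per-subset bound above and summing over the $\binom{M}{m}$ subsets $S$ (which equals $\binom{M}{n+1}$ in the regime $M\ge n+1$) produces \eqref{eq:bar-Phi-M}. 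The main obstacle I anticipate is the Helly's-dimension accounting for the stacked program: one must argue carefully that the separability of cost and constraints makes the combined optimizer coincide with the tuple of individual optimizers and that support constraints do not interact across blocks, so that the combined Helly's dimension is the sum $m\zeta$ rather than the ambient value $nm$. By comparison, the domination step and the Carath\'eodory covering are routine once $V(\cdot)$ is recognized as a supremum.
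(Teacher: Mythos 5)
Your proposal is correct, and its skeleton --- the Carath\'eodory covering of $\X_M(\omega)$ by the sub-hulls generated by $\min\{n+1,M\}$ of the optimizers, followed by a union bound over those $\binom{M}{\min\{n+1,M\}}$ sub-hulls --- is exactly the paper's. Where you genuinely diverge is in the per-subset step. The paper disposes of it by citation: it invokes \cite[Equation (24)]{calafiore:lyons:13}, together with Theorem \ref{th:bar-Phi}, to assert that a near-supremum point $\xi^\star(\omega)$ in the hull of $n+1$ scenario optimizers satisfies $\mathbb{P}^N\left( \left\{ \omega \in \Delta^N \mid V(\{\xi^\star(\omega)\}) > \epsilon - \epsilon' \right\} \right) \leq \Phi\left(\epsilon - \epsilon', \zeta(n+1), N\right)$, and then sends $\epsilon' \rightarrow 0$ using continuity of $\Phi$. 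You instead prove that step from scratch: the stacked program on $\R^{nm}$ with separable cost, product feasible set, and constraint $G(\mathbf{x},\delta) := \max_j g(x_{k_j},\delta)$ has as its optimizer the tuple of the individual optimizers; a sample is a support constraint of the stacked program if and only if its removal lowers the optimal value of at least one block, so the stacked support constraints are the union of the blocks' support constraints and Helly's dimension is at most $\zeta\min\{n+1,M\}$; and the convexity domination $g(x,\delta) \leq G(\mathbf{x}^\star,\delta)$, valid uniformly for every $x$ in the sub-hull, converts the scenario bound for the stacked optimizer into a bound on $V$ of the whole sub-hull. This buys two things: your argument is self-contained (it is, in essence, a proof of the Calafiore--Lyons inequality the paper merely cites), and because the domination holds uniformly over the hull you never need the paper's $\epsilon'$-approximation and $\limsup$ detour. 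A small bookkeeping bonus: your count $\binom{M}{\min\{n+1,M\}}$ is also correct when $M \leq n$ (it equals $1$ there), whereas the stated coefficient $\binom{M}{n+1}$ vanishes in that regime, so the corollary as written should be read under the implicit assumption $M \geq n+1$.
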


After solving all the $M$ \textsc{SP}s from \eqref{eq:SPk} for the given multi-sample $\bar{\omega} \in \Delta^N$, we can solve the following approximation of \textsc{CCP}($\epsilon$) in \eqref{eq:CCP}: 
\begin{equation}\label{eq:SP-tilde}
\tilde{\textsc{SP}}[\bar\omega]: \ \left\{
\begin{array}{l}
\displaystyle \min_{ x \in \mc{X} } \ J(x) \\
\textup{sub. to: }  x \in \X_M(\bar\omega),
\end{array}
\right.
\end{equation}

and explicitly compute the corresponding sample complexity.
\begin{corollary}\label{cor:bar-Phi}
For each $k \in \Z{[1,M]}$, let $x_k^\star$ and $\zeta$ be, respectively, the optimizer mapping and Helly's dimension of \textsc{SP}$_k$ in \eqref{eq:SPk}, and let ${\X}_M$ be as in \eqref{eq:XM}. For all $\epsilon, \beta \in (0,1)$, if
\begin{equation}\label{eq:our-sample-size}
\displaystyle N \geq \frac{ \frac{e}{e-1}  \min\{n+1,M\}}{\epsilon} \left( \zeta - 1 + \textup{ln}\left( \frac{M}{\beta} \right) \right),
\end{equation}
then $\mathbb{P}^N \left(  \left\{ \omega \in \Delta^N \mid   V( \X_M(\omega) ) \leq \epsilon \right\} \right) \geq 1-\beta$, i.e., with probability no smaller than $1-\beta$, 
any feasible solution of $\tilde{\textsc{SP}}[\bar\omega]$ in \eqref{eq:SP-tilde} is feasible for $\textup{CCP}(\epsilon)$ in \eqref{eq:CCP}.
\qed
\end{corollary}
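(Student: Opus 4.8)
The plan is to derive Corollary~\ref{cor:bar-Phi} directly from Theorem~\ref{th:bar-Phi} together with the explicit sample-size bound \eqref{eq:N-Campi-Calafiore-1}. Theorem~\ref{th:bar-Phi} already supplies the key probabilistic estimate
\[
\mathbb{P}^N\left( \left\{ \omega \in \Delta^N \mid V( \X_M(\omega) ) > \epsilon \right\} \right) \leq M\,\Phi\!\left( \tfrac{\epsilon}{\min\{n+1,M\}}, \zeta, N \right),
\]
so it suffices to force the right-hand side to be at most $\beta$, i.e. to guarantee $\Phi\!\left( \tfrac{\epsilon}{\min\{n+1,M\}}, \zeta, N \right) \leq \tfrac{\beta}{M}$.

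First I would recall that, by \eqref{eq:N-Campi-Calafiore-1}, the inequality $\Phi(\tilde\epsilon, \zeta, N) \leq \tilde\beta$ holds as soon as $N \geq \frac{e/(e-1)}{\tilde\epsilon}\big(\zeta - 1 + \ln(1/\tilde\beta)\big)$, for any $\tilde\epsilon,\tilde\beta \in (0,1)$. I would then instantiate this bound with the deflated tolerance $\tilde\epsilon := \epsilon/\min\{n+1,M\}$ and the inflated confidence budget $\tilde\beta := \beta/M$, both of which lie in $(0,1)$ since $\epsilon,\beta \in (0,1)$ and $\min\{n+1,M\} \geq 1$, $M \geq 1$. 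Substituting these values, the sufficient condition becomes
\[
N \geq \frac{\frac{e}{e-1}\,\min\{n+1,M\}}{\epsilon}\left( \zeta - 1 + \ln\!\left( \frac{M}{\beta} \right) \right),
\]
which is exactly the hypothesis \eqref{eq:our-sample-size}. Hence, under \eqref{eq:our-sample-size}, the estimate of Theorem~\ref{th:bar-Phi} yields $\mathbb{P}^N\big(\{\omega : V(\X_M(\omega)) > \epsilon\}\big) \leq \beta$, equivalently $\mathbb{P}^N\big(\{\omega : V(\X_M(\omega)) \leq \epsilon\}\big) \geq 1-\beta$.

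Finally I would translate this set-feasibility statement into feasibility of $\tilde{\textsc{SP}}[\bar\omega]$ via Definition~\ref{def:violation-probability}. On the event $\{V(\X_M(\omega)) \leq \epsilon\}$, the definition of $V$ as a supremum over the set forces $\mathbb{P}(\{g(x,\delta) > 0\}) \leq \epsilon$, i.e. $\mathbb{P}(\{g(x,\delta) \leq 0\}) \geq 1-\epsilon$, for \emph{every} $x \in \X_M(\omega)$. Since the constraint of $\tilde{\textsc{SP}}[\bar\omega]$ confines the decision to $\X_M(\bar\omega)$, any feasible solution of $\tilde{\textsc{SP}}[\bar\omega]$ is then feasible for $\textup{CCP}(\epsilon)$, which closes the argument.

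I do not anticipate a genuine obstacle: the corollary is essentially a repackaging of Theorem~\ref{th:bar-Phi} into an a-priori computable sample size. The only point demanding mild care is the bookkeeping of constants—correctly propagating the deflation of the tolerance by the factor $\min\{n+1,M\}$ and the inflation of the confidence by the factor $M$ through the logarithmic bound \eqref{eq:N-Campi-Calafiore-1}, so that the two substitutions recombine into precisely the stated threshold rather than a looser one.
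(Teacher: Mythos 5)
Your proposal is correct and follows essentially the same route as the paper's own proof: the paper likewise takes the bound \eqref{eq:bar-Phi-M} from Theorem~\ref{th:bar-Phi}, reduces the task to finding $N$ with $\Phi\left(\tfrac{\epsilon}{\min\{n+1,M\}},\zeta,N\right) \leq \beta/M$, and inverts this via the explicit sample-size formula \eqref{eq:N-Campi-Calafiore-1} (the paper simply cites the proof of Theorem 3 in \cite{alamo:tempo:luque:ramirez:13} for this last step, which is exactly the substitution $\tilde\epsilon = \epsilon/\min\{n+1,M\}$, $\tilde\beta = \beta/M$ that you carry out explicitly).
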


\begin{remark}\label{rem:SPks}
The constraint $x \in \mc{C}_k$ in \eqref{eq:SPk} provides a way to upper bound Helly's dimension $\zeta$ of $\textsc{SP}_k[\bar\omega]$. Many choices of $\mc{C}_k$ are possible. For instance, $\mc{C}_k := \R^n$ in general only provides the upper bound $\zeta \leq n$.
The minimum upper bound on Helly's dimension for $\textsc{SP}_k[\bar\omega]$ in \eqref{eq:SPk}, i.e. $\zeta = 1$, is obtained whenever $x$ is constrained to live in a linear subspace of dimension one \cite[Lemma 3.8]{schildbach:fagiano:morari:13}. This happens independently from $g$ if, for some fixed $x^0, c_k \in \R^n$, $k = 1, 2, ..., M$, we define 
\begin{equation}\label{eq:Ck-best}
\mc{C}_k := \left\{ x_k^0 + \lambda c_k \in \R^n \mid \lambda \in \R \right\}.
\end{equation} 
With such a chioce of $\mc{C}_k$, $\textsc{SP}_k$ is equivalent to the program
\begin{equation} \label{eq:SP-1dim} 
\min_{ \lambda \in \R } \{- \lambda \} \ \ \textup{sub. to: } (x^0 + \lambda c_k) \in \mc{X}, \ g( x^0 +  \lambda c_k , \delta^i ) \leq 0 \ \ \forall i \in \Z{[1,N]},
\end{equation}
which has unique optimizer and Helly's dimension $\zeta = 1$, since the decision variable $\lambda$ is $1$-dimensional. In this case, the required sample size (for $M\geq n+1$) from \eqref{eq:our-sample-size} is $\left(\frac{e}{e-1} \right) (n+1)/\epsilon \ \textup{ln}\left( M / \beta \right)$, which is \textit{linear} in the number $n$ of decision variables. In particular, if we a-priori know a feasible point $x^0$ for $\textup{CCP}(\epsilon)$ in \eqref{eq:CCP}, then the solution of \eqref{eq:SP-1dim} generates a point $x^0 + \lambda_k^\star(\bar\omega) c_k \in \X(\bar\omega)$ for each $k \in \Z[1,M]$. This additional knowledge is available in many situations of interest \cite[Section 1.1]{care:garatti:campi:11}, for instance in \cite{campi:calafiore:garatti:09, campi:garatti:prandini:09}.

We note that other convex problems can be used in place of $\textsc{SP}_k$ in \eqref{eq:SPk}. For instance, consider the set $R \mathbb{B} := \{ x \in \R^n \mid \left\| x \right\| \leq R \} \subseteq \R^n$, where $R>0$ is such that $\mc{X} \subset R \mathbb{B}$, 
and $M$ points $z_1, z_2, ..., z_M $ on the boundary of $R \mathbb{B}$.
For each $k \in \Z{[1,M]}$, we can define the following \textsc{SP}.
\begin{equation}\label{eq:SPk-alternative}
\textsc{SP}'_k[\bar\omega]: \ \left\{
\begin{array}{l}
\displaystyle \min_{ x \in \mc{X} } \ \left\| x - z_k  \right\| \\
\textup{sub. to: }  g(x,\bar\delta^{(i)}) \leq 0 \ \ \forall i \in \Z{[1,N]}.
\end{array}
\right.
\end{equation}
More generally, the way of selecting the convex problems $\textsc{SP}_k[\bar\omega]$, and hence their associated optimizers $x_k^\star(\bar\omega)$, for $k = 1, 2, ..., M$, is not an essential feature for our  feasibility results.
{\hfill $\square$}
\end{remark}

In view of the approximating, non-convex, problem $\tilde{\textsc{SP}}[\bar\omega]$ in \eqref{eq:SP-tilde} we are interested in getting a ``large'' $\X_M(\bar\omega)$ in \eqref{eq:XM}. 
The choice in \eqref{eq:SPk} is motivated by the fact that the optimal solution $x_k^\star(\bar\omega)$ of ${\textsc{SP}}_k[\bar\omega]$ belongs to the boundary of the actual (convex) feasibily set $\X(\bar\omega) := \left\{ x \in \mc{X} \mid g\left( x, \bar\delta^{(i)} \right) \leq 0 \ \forall i \in \Z[1,N] \right\}$, therefore so do the extreme points of the convex-hull set $\X_M$ in \eqref{eq:XM}, as shown in Figure \ref{fig:non-convex-scenario}.

We finally emphasize that we obtain the probabilistic guarantees in \eqref{th:bar-Phi} for \textit{any} feasible solution of $\tilde{\textsc{SP}}$ in \eqref{eq:SP-tilde}, not just for the optimal solution. This is of practical importance, because $\tilde{\textsc{SP}}[\bar{\omega}]$ is non-convex and hence it is in general impossible to numerically compute its optimal solution.

\subsection{On mixed-integer random non-convex programs}
The results in Lemma \ref{lem:many-xis} and Theorem \ref{th:bar-Phi} can be further exploited to provide probabilistic guarantees for the following class of mixed-integer \textsc{CCP}s.
\begin{equation}\label{eq:CCP-union}
\textup{CCP}^{\textup{m-i}}(\epsilon): \ \left\{ 
\begin{array}{l}
\displaystyle \min_{ (x,j) \in \mathcal{X} \times \Z{[1,L]} } \ J(x) \\
\displaystyle \textup{sub. to: } \mathbb{P} \left( \{ \delta \in \Delta \mid g_j(x,\delta) \leq 0 \}\right) \geq 1-\epsilon,
\end{array}
\right.
\end{equation}
where the functions $g_1, g_2, ..., g_L : \R^n \times \R^m \rightarrow \R$ satisfy the following assumption.
\begin{assumption} \label{ass:gi-convex}
For all $j \in \Z{[1,L]}$ and $\delta \in \Delta$, the mapping $x \mapsto g_j(x,\delta)$ is convex and lower semicontinuous. 
\qed
\end{assumption}
Notice that unlike \cite{calafiore:lyons:fagiano:12}, \cite{mohajerin-esfahani:sutter:lygeros:13}, we also allow for possibly non-convex objective functions $J$.

We also define the probability of violation (of any set $\X \subseteq \mc{X}$) associated with $\textup{CCP}^{\textup{m-i}}(\epsilon)$ in \eqref{eq:CCP-union} as
\begin{equation} \label{eq:V-min}
\textstyle V^{\textup{m-i}}\left( \X \right) := \sup_{ x \in \X } \ \mathbb{P} \left( \left\{ \delta \in \Delta \mid \min_{j \in \Z{[1,L]}} g_j(x,\delta) > 0 \right\}  \right).
\end{equation}
Note that, for all $j \in \Z[1,L]$, it holds $V^{\textup{m-i}}\left( \X \right) \leq \sup_{ x \in \X } \ \mathbb{P} \left( \left\{ \delta \in \Delta \mid g_j(x,\delta) > 0 \right\}  \right)$.

We can proceed similarly to Section \ref{sec:main-results}.  
For fixed multi-sample $\bar\omega \in \Delta^N$, we consider the $M$ cost vectors $c_{1}, c_{2}, ..., c_{M} \in \R^n$ and the convex sets $\mc{C}_{1}, \mc{C}_{2}, ..., \mc{C}_{M} \subseteq \R^n$, so that, for all $(j,k) \in \Z[1,L] \times \Z[1,M]$ we define
\begin{equation}\label{eq:SPkj}
\textsc{SP}^{\textup{m-i}}_{j,k}[\bar\omega]: \ \left\{
\begin{array}{l}
\displaystyle \min_{ x \in \mc{C}_{k} \cap \mc{X} } \ c_k^\top x \\
\textup{sub. to: }  g_j(x,\bar\delta^{(i)}) \leq 0 \ \ \forall i \in \Z{[1,N]}\\
\end{array}
\right.
\end{equation}
with optimizer $x_{j,k}^\star(\bar\omega)$. Then we can define the set $\X_j(\omega)$ as in \eqref{eq:SPk}--\eqref{eq:XM}, i.e.
\begin{equation} \label{eq:Xj}
\X_j(\omega) := \text{conv}\left( \left\{ x_{j,1}^\star(\omega), x_{j,2}^\star(\omega), ..., x_{j,M}^\star(\omega) \right\} \right).
\end{equation}
If $\zeta \in \Z{[1,n]}$ is an upper bound for Helly's dimension of the convex programs $\textsc{SP}^{\textup{m-i}}_{j,k}[\bar\omega]$, then it follows from Theorem \ref{th:bar-Phi} and \eqref{eq:V-min} that, for all $j \in \Z[1,L]$, we have 
\begin{multline}
\mathbb{P}^N\left( \left\{ \omega \in \Delta^N \mid V^{ \textup{m-i} }\left( \X_j(\omega) \right) > \epsilon \right\}\right) \leq \\
\textstyle \mathbb{P}^N\left( \left\{ \omega \in \Delta \mid \sup_{ x \in \X_j(\omega) } \mathbb{P}\left( \left\{ \delta \in \Delta \mid  g_j( x,\delta ) \right\} \right) > \epsilon  \right\} \right) \leq M \Phi\left( \frac{\epsilon}{ \min\{ n+1, M \} }, \zeta , N \right).
\end{multline}

We can then establish the following upper bound on the probability of violation of the union of convex-hull sets constructed above.
\begin{theorem}\label{th:bar-Phi-1}
Suppose Assumption \ref{ass:gi-convex} holds. For each $(j,k) \in \Z{[1,L]} \times \Z{[1,M]}$, let $x_{j,k}^\star$ and $\zeta$ be, respectively, the optimizer mapping and an upper bound for Helly's dimension of \textsc{SP}$_{j,k}^{\textup{m-i}}$ in \eqref{eq:SPkj}; let $\X_j$ be defined as in \eqref{eq:Xj}. Then
\begin{equation} \label{eq:bar-Phi-ML}
\mathbb{P}^N\left( \left\{ \omega \in \Delta^N \mid V^{\textup{m-i}}\left( \cup_{j=1}^L \X_j(\omega) \right) > \epsilon \right\} \right) \leq L M \Phi\left( \frac{\epsilon}{ \min\{n+1,M\}  }, \zeta, N \right).
\end{equation}
\qed
\end{theorem}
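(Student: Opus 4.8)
The plan is to reduce the union statement to the single-index bound established immediately before the theorem, using only two elementary facts: that a supremum over a union splits as a maximum, and the sub-additivity (union bound) of the product measure $\mathbb{P}^N$. First I would observe that, because $V^{\textup{m-i}}$ is a supremum over its set argument and the integrand $\mathbb{P}\left(\left\{\min_{j'} g_{j'}(x,\delta) > 0\right\}\right)$ depends only on $x$ and not on which piece $\X_j$ contains it, we have
$$V^{\textup{m-i}}\left( \cup_{j=1}^L \X_j(\omega) \right) = \max_{j \in \Z[1,L]} V^{\textup{m-i}}\left( \X_j(\omega) \right).$$
Consequently the event of interest factors as a finite union, $\left\{ \omega \mid V^{\textup{m-i}}(\cup_j \X_j(\omega)) > \epsilon \right\} = \cup_{j=1}^L \left\{ \omega \mid V^{\textup{m-i}}(\X_j(\omega)) > \epsilon \right\}$.

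Next I would control each of these $L$ events in isolation. By the domination noted after \eqref{eq:V-min}, for every fixed $j$ we have $V^{\textup{m-i}}(\X_j(\omega)) \leq \sup_{x \in \X_j(\omega)} \mathbb{P}\left(\left\{ \delta \in \Delta \mid g_j(x,\delta) > 0 \right\}\right)$, so $\left\{ V^{\textup{m-i}}(\X_j(\omega)) > \epsilon \right\} \subseteq \left\{ \sup_{x \in \X_j(\omega)} \mathbb{P}(\{ g_j(x,\delta) > 0\}) > \epsilon \right\}$. Under Assumption \ref{ass:gi-convex} each $g_j$ is convex and lower semicontinuous in $x$, so the family $\left\{ \textsc{SP}^{\textup{m-i}}_{j,k} \right\}_{k=1}^M$ in \eqref{eq:SPkj} is a genuine instance of the convex construction \eqref{eq:SPk} (with $g$ replaced by $g_j$), and $\X_j$ in \eqref{eq:Xj} is its convex-hull set \eqref{eq:XM}. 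Theorem \ref{th:bar-Phi} therefore applies verbatim to the $j$-th family and bounds the probability of the larger event by $M \Phi\left( \frac{\epsilon}{\min\{n+1,M\}}, \zeta, N \right)$; monotonicity of $\mathbb{P}^N$ transfers this to $\left\{ V^{\textup{m-i}}(\X_j(\omega)) > \epsilon \right\}$, giving the per-index bound
$$\mathbb{P}^N\left( \left\{ \omega \in \Delta^N \mid V^{\textup{m-i}}(\X_j(\omega)) > \epsilon \right\} \right) \leq M \Phi\left( \frac{\epsilon}{\min\{n+1,M\}}, \zeta, N \right).$$

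Finally I would assemble the pieces: applying sub-additivity of $\mathbb{P}^N$ to the union from the first step and inserting the identical per-index bound from the second step yields
$$\mathbb{P}^N\left( \cup_{j=1}^L \left\{ V^{\textup{m-i}}(\X_j(\omega)) > \epsilon \right\} \right) \leq \sum_{j=1}^L M \Phi\left( \frac{\epsilon}{\min\{n+1,M\}}, \zeta, N \right) = L M \Phi\left( \frac{\epsilon}{\min\{n+1,M\}}, \zeta, N \right),$$
which is exactly \eqref{eq:bar-Phi-ML}. I do not expect a serious obstacle here, since Theorem \ref{th:bar-Phi} does the heavy lifting; the only points requiring care are checking that the $\min_{j'} g_{j'} \leq g_j$ domination is applied with the correct inequality direction (so that the mixed-integer event is contained in, not merely comparable to, the single-constraint event), and confirming that a single $\zeta$ serves as a uniform upper bound on Helly's dimension across all $L$ families, so that the $L$ per-index bounds are identical and the sum collapses cleanly to the prefactor $LM$.
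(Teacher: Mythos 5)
Your proposal is correct and follows essentially the same route as the paper: the per-index bound via the domination $V^{\textup{m-i}}(\X_j) \leq \sup_{x \in \X_j} \mathbb{P}(\{\delta \mid g_j(x,\delta) > 0\})$ together with Theorem \ref{th:bar-Phi} applied to the $j$-th family (this is exactly the displayed inequality the paper derives just before stating the theorem), followed by a union bound over $j \in \Z[1,L]$ as in Lemma \ref{lem:many-xis}. The paper's own proof is a one-line reference to the proof of Theorem \ref{th:bar-Phi}, and your write-up simply makes explicit the same steps, with the inequality directions and the uniformity of $\zeta$ across the $L$ families correctly handled.
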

We can now approximate the \textup{CCP}$^{\textup{m-i}}(\epsilon)$ in \eqref{eq:CCP-union} by
\begin{equation}\label{eq:SP-tilde-1}
\tilde{\textsc{SP}}^{\textup{m-i}}[\bar\omega]: \ \left\{
\begin{array}{l}
\displaystyle \min_{ (x,j) \in \mc{X} \times \Z[1,L] } \ J(x) \\
\textup{sub. to: }  x \in \X_j(\bar\omega),
\end{array}
\right.
\end{equation}
and state the following lower bound on the required sample size.
\begin{corollary}\label{cor:bar-Phi-1}
Suppose Assumption \ref{ass:gi-convex} holds. For each $(j,k) \in \Z{[1,L]} \times \Z{[1,M]}$, let $x_{j,k}^\star$ and $\zeta$ be, respectively, the optimizer mapping and an upper bound for Helly's dimension of \textsc{SP}$_{j,k}^{\textup{m-i}}$ in \eqref{eq:SPkj}; let $\X_j$ be defined as in \eqref{eq:Xj}.
If
\begin{equation}
\displaystyle N \geq \frac{ \frac{e}{e-1} \min\{n+1,M\} }{\epsilon} \left( \zeta - 1 + \textup{ln}\left( \frac{LM}{\beta} \right) \right),
\end{equation}
then $\mathbb{P}^N \left(  \left\{ \omega \in \Delta^N \mid   V^{\textup{m-i}}\left( \cup_{j=1}^L \X_j(\omega) \right) > \epsilon \right\} \right) \geq 1-\beta$, i.e., with probability no smaller than $1-\beta$, 
any feasible solution of $\tilde{\textsc{SP}}^{\textup{m-i}}[\bar\omega]$ in \eqref{eq:SP-tilde-1} is feasible for $\textup{CCP}^{\textup{m-i}}(\epsilon)$ in \eqref{eq:CCP-union}.
\qed
\end{corollary}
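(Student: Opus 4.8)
The plan is to treat Corollary \ref{cor:bar-Phi-1} as the sample-complexity companion of Theorem \ref{th:bar-Phi-1}, in complete analogy with the way Corollary \ref{cor:bar-Phi} is obtained from Theorem \ref{th:bar-Phi}. The entire content reduces to inverting the tail bound \eqref{eq:bar-Phi-ML} in the sample size $N$. First I would start from Theorem \ref{th:bar-Phi-1}, which already gives
\[
\mathbb{P}^N\left( \left\{ \omega\in\Delta^N \mid V^{\textup{m-i}}\left( \cup_{j=1}^L\X_j(\omega) \right) >\epsilon \right\} \right) \leq LM\,\Phi\left( \frac{\epsilon}{\min\{n+1,M\}},\zeta,N \right),
\]
so it suffices to force the right-hand side below $\beta$, i.e. to guarantee $\Phi\!\left(\epsilon/\min\{n+1,M\},\zeta,N\right)\leq\beta/(LM)$.

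Second, I would invoke the explicit inversion of $\Phi$ recorded in \eqref{eq:N-Campi-Calafiore-1}: to achieve $\Phi(\epsilon',\zeta,N)\leq\beta'$ it is enough that $N\geq\frac{e/(e-1)}{\epsilon'}\bigl(\zeta-1+\ln(1/\beta')\bigr)$. Applying this with $\epsilon'=\epsilon/\min\{n+1,M\}$ and $\beta'=\beta/(LM)$ turns $\ln(1/\beta')$ into $\ln(LM/\beta)$ and pulls the factor $\min\{n+1,M\}$ out of $1/\epsilon'$, producing precisely the threshold stated in the Corollary. Since $\Phi(\epsilon',\zeta,\cdot)$ is nonincreasing in $N$, any $N$ at or above this threshold keeps the bound valid; combining with the first step yields $\mathbb{P}^N(\{V^{\textup{m-i}}(\cup_j\X_j)>\epsilon\})\leq\beta$, equivalently $\mathbb{P}^N(\{V^{\textup{m-i}}(\cup_j\X_j)\leq\epsilon\})\geq 1-\beta$.

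For the feasibility interpretation I would \emph{not} argue directly from $V^{\textup{m-i}}$, since $V^{\textup{m-i}}$ in \eqref{eq:V-min} only controls $\min_j g_j$ and is therefore a relaxation of the per-index violation $\mathbb{P}(g_j(x,\delta)>0)$ that actually governs feasibility of a pair $(x,j)$ in \eqref{eq:CCP-union}. Instead I would retain the finer events used inside Theorem \ref{th:bar-Phi-1}: for each $j$ the set $B_j:=\{\omega \mid \sup_{x\in\X_j(\omega)}\mathbb{P}(g_j(x,\delta)>0)>\epsilon\}$ satisfies $\mathbb{P}^N(B_j)\leq M\Phi(\epsilon/\min\{n+1,M\},\zeta,N)$ by Theorem \ref{th:bar-Phi} applied to the family $\{x_{j,k}^\star\}_k$ with constraint $g_j$, and a union bound over $j$ gives $\mathbb{P}^N(\cup_j B_j)\leq LM\Phi(\cdots)\leq\beta$ under the stated sample size. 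On the complementary event $\cap_j B_j^{\mathrm{c}}$, every feasible $(x,j)$ of $\tilde{\textsc{SP}}^{\textup{m-i}}$ in \eqref{eq:SP-tilde-1} has $x\in\X_j$ and hence $\mathbb{P}(g_j(x,\delta)>0)\leq\epsilon$, i.e. it is feasible for $\textup{CCP}^{\textup{m-i}}(\epsilon)$.

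I expect the purely computational step (the $\Phi$-inversion) to be routine, since it is the literal substitution $\epsilon\mapsto\epsilon/\min\{n+1,M\}$, $\beta\mapsto\beta/(LM)$ into the known bound. The only genuine subtlety, and the main obstacle to a clean write-up, is the feasibility interpretation: because $V^{\textup{m-i}}$ is defined through the pointwise minimum over $j$, the bound on $\mathbb{P}^N(\{V^{\textup{m-i}}>\epsilon\})$ is strictly weaker than what feasibility of a fixed pair $(x,j)$ demands, so the argument must be carried through the per-index events $B_j$ rather than through $V^{\textup{m-i}}$ itself. Reconciling this passage with the $V^{\textup{m-i}}$-level probability statement is where care is needed.
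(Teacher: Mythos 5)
Your proposal is correct and follows essentially the same route as the paper: the paper's proof of Corollary \ref{cor:bar-Phi-1} is a one-line reduction to the proof of Corollary \ref{cor:bar-Phi}, i.e., inverting the tail bound of Theorem \ref{th:bar-Phi-1} by requiring $\Phi\left(\epsilon/\min\{n+1,M\},\zeta,N\right)\leq\beta/(LM)$ and applying the explicit sample-size bound \eqref{eq:N-Campi-Calafiore-1} with $\epsilon\mapsto\epsilon/\min\{n+1,M\}$ and $\beta\mapsto\beta/(LM)$. Your extra care with the per-index events $B_j$ --- observing that $V^{\textup{m-i}}$ in \eqref{eq:V-min} only controls $\min_j g_j$ and is therefore too weak to yield the feasibility claim for a fixed pair $(x,j)$ --- is not a different route but a correct, explicit rendering of what the paper leaves implicit in the display preceding Theorem \ref{th:bar-Phi-1}, where each $V^{\textup{m-i}}\left(\X_j(\omega)\right)$ is bounded precisely through $\sup_{x\in\X_j(\omega)}\mathbb{P}\left(\left\{\delta\in\Delta\mid g_j(x,\delta)>0\right\}\right)$.
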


Let us comment on the sample size $N$ given in Corollary \ref{cor:bar-Phi-1}, relative to SP in \eqref{eq:SP-tilde-1}. The formulation in \eqref{eq:CCP-union} subsumes the ones in \cite{calafiore:lyons:fagiano:12} and \cite{mohajerin-esfahani:sutter:lygeros:13}. 
In \cite[Section 4]{mohajerin-esfahani:sutter:lygeros:13} it is shown that it is possible to derive a sample size $N$ which grows linearly with the dimension $d$ of the integer variable $y \in \left(\Z{[ -l/2, l/2 ]} \right)^d$, so that $L := (l+1)^d$ in \eqref{eq:CCP-union}. 
The addition here is that we can also deal with non-convex objective functions $J(x)$ and non-convex deterministic constraints $h(x) \leq 0$ according to Remark \ref{rem:general-CCP}, still maintaining a sample size with logarithmic dependence on $L$, i.e. linear dependence on $d$.
The result in \cite[Theorem 3]{calafiore:lyons:fagiano:12} presents an exponential dependence of the sample size $N$ with respect to the dimension $d$ of the integer variable $y$, but the result therein is slightly more general because it technically covers for a possibly unbounded domain for $y$.

\section{Discussion and comparisons} \label{sec:discussion}

\subsection{Sampling and discarding}
The problem $\textup{SP}_k$ in \eqref{eq:SPk} is also suitable for a sampling-and-discarding approach \cite{calafiore:10, campi:garatti:11}. 
In particular, the aim is to reduce the optimal value of each (convex) SP$_k$ in \eqref{eq:SPk}, and hence enlarge the set $\mathbb{X}_M$ in \eqref{eq:XM}. Indeed, we can a-priori decide that we will discard $r$ of the $N$ samples of the uncertainty. As discussed in \cite{calafiore:10}, any removal algorithm could be employed for the discarding part. Since the optimal constraint discarding is of combinatorial complexity, \cite[Section 5.1]{calafiore:10} proposes greedy algorithms and an approach based on the Lagrange multipliers associated with the constraint functions.
If $N$ and $r$ are taken such that
\begin{equation} \label{eq:N-SaD}
{\zeta + r - 1 \choose r} \Phi\left(\epsilon, \zeta + r, N \right) = {\zeta + r -1\choose r}  \sum_{ i = 1 }^{ \zeta + r -1 } { N \choose i } \epsilon^i ( 1 - \epsilon )^{N-i}  \leq \beta,
\end{equation}
where $\zeta \in \Z{[1,n]}$ is an upper bound on the Helly's dimension of the problem $\textup{SP}_k$ in \eqref{eq:SPk},
then, for all $k \in \Z{[ 1, M ]}$, we have that the optimizer mapping $x_k^*(\cdot)$ of SP$_k[\cdot]$ in \eqref{eq:SPk} (where only $N-r$ constraints are enforced) is such that $ \mathbb{P}^N\left( \left\{ \omega \in \Delta^N \mid V( \{ x_k^\star(\omega)\} > \epsilon ) \right\} \right) \leq \beta $ \cite[Theorem 4.1]{calafiore:10}, \cite[Theorem 2.1]{campi:garatti:11}. Explicit bounds on the sample and removal couple $(N, r)$ are given in \cite[Section 5]{calafiore:10}, \cite[Section 4.3]{campi:garatti:11}.

It then follows from \eqref{eq:N-SaD} that, with $r$ removals over $N$ samples, the optimizer mappings $x_1^\star$, $x_2^\star$, ..., $x_M^\star$ satisfy Assumption \ref{ass:many-xis} with $\beta_k := {\zeta + r - 1 \choose r} \Phi\left(\epsilon, \zeta + r, N \right)$ for all $k \in \Z{[1,M]}$. Therefore, in view of Lemma \ref{lem:many-xis}, we get that the probabilistic guarantees established in Theorem \ref{th:bar-Phi} become
$$ \mathbb{P}^N\left( \left\{ \omega \in \Delta^N \mid V( \X_M(\omega) ) > \epsilon \right\} \right) \leq M {\zeta + r - 1 \choose r} \Phi\left(\frac{\epsilon}{n+1}, \zeta + r, N \right). $$

Since the above inequality relies on $\mathbb{P}^N$, we emphasize that it is possible to remove \textit{different} sets of $r$ samples from each $\textup{SP}_k$. Namely, for all $k \in \Z{[1,M]}$, let $\mc{I}_k \subseteq \Z{[1,M]}$ be a set of indices with cardinality $|\mc{I}_k| = r$. Thus, we can discard the samples $\{ \bar\delta^{(i)} \mid i \in \mc{I}_k \}$ from $\textup{SP}_k$, possibly with $\mc{I}_k \neq \mc{I}_j$ for $k \neq j$.

\subsection{Comparison with the stastical learning theory approach}
Let us compare our sample size in Corollary \ref{cor:bar-Phi} to the corresponding bounds from statistical learning theory based on the VC dimension. 
First, in terms of constraint violation tolerance $\epsilon$, our sample size in \eqref{eq:our-sample-size} grows as $1/\epsilon$ while the sample size provided via the classic statistical learning theory grows as $1/\epsilon^2 \textup{log}( 1/\epsilon^2 )$ \cite[Sections 4, 5]{vidyasagar:01}, \cite[Chapter 8]{tempo:calafiore:dabbene}. An important refinement over the classic result is possible considering the so-called ``one-sided probability of constrained failure'', see for instance \cite[Chapter 8]{anthony:biggs}, \cite[Chapter 7]{vidyasagar:97}, \cite[Section 3]{erdogan:iyengar:06}, \cite[Sections IV, V]{alamo:tempo:camacho:09}. The typical sample size provided in those references is
$ 4/\epsilon \left( \xi \textup{ log}_2\left( 12/{\epsilon}  \right)+ \textup{log}_2\left( 2/{\beta}\right) \right)$,
where $\xi$ is the VC dimension associated with the family of constraint functions $\{g(x, \cdot): \Delta \rightarrow \R \mid x \in \mc{X}\}$. Note that the asymptotic dependence on $\epsilon$ drops from $1/\epsilon^2 \textup{log}( 1/\epsilon^2 )$ to $1/\epsilon \textup{ ln}( 1/ \epsilon )$, but still remains higher than the sample size in \eqref{eq:our-sample-size}.
Second, the sampling-and-discarding approach can be used to enlarge the feasibility domain $\X_M(\bar \omega)$ in \eqref{eq:XM}, as the explicit sample size only grows linearly with the number of removals $r$ \cite[Corollary 5.1]{calafiore:10}. 
On the other hand \cite[Chapter 8, pag. 103]{anthony:biggs},  statistical learning theory approaches cover the possibility of discarding a certain fraction $\rho \in [0,1)$ of the samples, resulting in a sample size of the order of $(\rho + \epsilon)/\epsilon^2 \ \textup{ln}\left( (\rho + \epsilon)/\epsilon^2 \right) $.
Let us indeed denote by $\X_M^{r}(\bar\omega)$ the feasibility set of the methodology of Section \ref{sec:main-results}, defined as in \eqref{eq:XM}, but where each vertex $x_k^\star(\bar\omega)$ is computed 
considering only $N-r$ samples.
It then follows that without any discarding, i.e. for $r=0$, the set $\X_M^{0}(\bar\omega) := \X_M(\bar\omega)$ in \eqref{eq:XM} is always a subset of the entire feasibility set $\X(\bar \omega)$ for any given multi-sample $\bar \omega \in \Delta^N$. However, for $r>0$, the inclusion $\X_M^{r}(\cdot) \subseteq \X(\cdot)$ is no more true therefore the feasibility set constructed in Section \ref{sec:main-results}, together with a sampling-and-discarding approach, is \textit{not necessarily} a subset of the classic statistical learning theory counterpart.
Third, and most important, the sample size in \eqref{eq:our-sample-size} depends only on the dimension $n$ of the decision variable, not on the VC dimension $\xi$ of the constraint function $g$ and, as already mentioned, $\xi$ may be difficult to estimate, or even infinite, in which case VC theory is not applicable.

On the other hand, approaches based on statistical learning theory offer some advantages over our method. They in fact cover general non-convex problems and, without any sampling and discarding, provide probabilistic guarantees for all feasible points, not only for those in a certain subset of given complexity.

\subsection{Comparison with mixed random-robust approach}
An alternative approach based on a mixture of randomized and robust optimization was presented in \cite{margellos:goulart:lygeros:13}. It requires solving a robust problem with the uncertainty being confined in an  appropriately parametrized set, generated in a randomized way to include $(1-\epsilon)$ of the probability mass of the uncertainty, with high confidence. Following this approach one obtains probabilistic guarantees for any feasible solution of the robust problem. In particular, the size of this subset depends on the parametrization of the uncertainty set, which in turn affects the number of scenarios that must be extracted.
However, in contrast to the current paper, the approach in \cite{margellos:goulart:lygeros:13} has some drawbacks listed as follows. First, it is not guaranteed that the a-priori chosen parametrization generates a feasible robust optimization problem. Second, if such robust program is feasible, it is in general conservative in terms of cost and computationally tractable only for a very limited class of non-convex problems. In particular, some additional structure on the dependence on the uncertainty must be assumed. Finally, the method in \cite{margellos:goulart:lygeros:13} comes with no explicit characterization of the probabilistically-feasible subset in the decision variable domain.

\section{Randomized Model Predictive Control of nonlinear control-affine systems and other control applications} \label{sec:control-applications}

\subsection{Randomized Model Predictive Control of nonlinear control-affine systems} \label{sec:MPC}

In this section we extend the results of \cite{calafiore:fagiano:12, schildbach:fagiano:frei:morari:13} to uncertain \textit{nonlinear control-affine} systems of the form
\begin{equation} \label{eq:nonlinear-system}
x^+ = f( x, v) + g(x, v) u,
\end{equation}
where $x \in \R^n$ is the state variable, $u \in \R^m$ is the control variable, and $v \in \mc{V} \subseteq \R^p$ is the uncertain random input. We assume state and control constraints $x \in \X \subseteq \R^n$, $u \in \U \subseteq \R^m$, where $\X$ and $\U$ are compact convex sets. 
We further assume the availability of i.i.d. samples $\bar v^{(1)}, \bar v^{(2)}, ...$ of the uncertain input, drawn according to a possibly-unknown probability measure $\mathbb{P}$ \cite[Definition 3]{calafiore:campi:05}.

For a horizon length $K$, let $\mathbf{u} := \left( u_0, u_1, ..., u_{K-1}\right)$ and $\mathbf{v} := \left( v_0, v_1, ..., v_{K-1}\right)$ denote a control-input and random-input sequence respectively.
We denote by $\phi(k; x, \mathbf{u}, \mathbf{v})$ the state solution of \eqref{eq:nonlinear-system} at time $k \geq 0$, starting from the initial state $x$, under the control-input sequence $\mathbf{u}$ and the random-input sequence $\mathbf{v}$. Likewise, given a control law $\kappa: \X \rightarrow \U$, we denote by $\phi_{\kappa}(k; x, \mathbf{v})$  the state solution of the system $x^+ = f( x,v ) + g(x,v) \kappa(x) $ at time $k \geq 0$, starting from the initial state $x$, under the random-input sequence $\mathbf{v}$.
The solution $\phi(k;x,\mathbf{u},\mathbf{v})$, as well as $\phi_{\kappa}(k;x,\mathbf{v})$, is a random variable itself\footnote{Random solutions, both $\phi(k;x,\mathbf{u},\mathbf{\cdot})$ and $\phi_{\kappa}(k;x,\mathbf{\cdot})$, exist under the assumption that for all $x \in \R^n$, the mapping $\delta \mapsto f(x,\delta) + g(x,\delta)$ is measurable and that $\kappa$ is measurable, see \cite[Section 5.2]{grammatico:subbaraman:teel:13} and Appendix \ref{app:measurability} for technical details.} because under the dependence on the random-input sequence $\mathbf{v}$.

Let $\ell: \R^n \times \R^m \rightarrow \R_{\geq 0}$ be the stage cost, and $\ell_f: \R^n \rightarrow \R_{\geq 0}$ be the terminal cost. We consider the random finite-horizon cost function
\begin{equation}
J(x, \mathbf{u}, \mathbf{v} ) := \ell_f( \phi\left(K; x, { \mathbf u }, {\mathbf v} \right)  ) + \sum_{ k = 0 }^{K-1} \ell( \phi\left(k; x, { \mathbf u }, {\mathbf v} ), u_{k} \right).
\end{equation}

Following \cite[Section 3.1]{schildbach:fagiano:frei:morari:13}, we formulate the multi-stage Stochastic MPC (SMPC) problem 
\begin{equation} \label{eq:SMPC}
\left\{
\begin{array}{l}
\displaystyle \min_{ \mathbf{u} \in \U^K } \ \mathbb{E}^K \left[ J( x, \mathbf{u}, \mathbf{\cdot} ) \right]\\
\begin{array}{l}
\textup{sub. to: } \mathbb{P}^{k} \left( \left\{ \mathbf{v} \in \mathcal{V}^k \mid \phi\left(k; x, \mathbf{u}, \mathbf{v}  \right) \in \X  \right\} \right) \geq 1-\epsilon \quad \forall k \in \Z{[1,K]}
\end{array}
\end{array}
\right.
\end{equation}
and its randomized (non-convex) counterpart
\begin{equation} \label{eq:RMPC}
\textup{SP}^{ \textup{MPC} }[\mathbf{ \bar  v}^{(1)}, \mathbf{\bar v}^{( 2 )}, ...]: \ \left\{
\begin{array}{l}
\displaystyle \min_{ \mathbf{u} \in \U^K } \ \sum_{i \in \mathcal{I}_0}  J( x, \mathbf{ u}, \mathbf{ \bar v}^{(i)} ) \\
\begin{array}{lll}
\textup{sub. to: }  & \phi( 1; x, { \mathbf{u} }, \mathbf{ \bar  v}^{(i)} ) \in \X  & \forall i \in \mathcal{I}_1 \\
 &  \phi( k; x, { \mathbf{u} }, \mathbf{ \bar  v}^{(i)} ) \in \X &  \forall i \in \mathcal{I}_2, \ \forall k \in \Z{[2, K]}, \ 
\end{array}
\end{array}
\right.
\end{equation}
for some disjoint index sets $\mathcal{I}_0, \mathcal{I}_1, \mathcal{I}_2 \subset \Z[1,\infty)$.
The receding horizon control policy is defined as follows. For each time step, we measure the state $x$ and let $\mathbf{u}^\star(x) := \left( u_0^\star, \ldots, u_{K-1}^\star\right)(x)$ be the solution of $\textup{SP}^{ \textup{MPC} }$ in \eqref{eq:RMPC}, for some drawn samples $\{ \mathbf{ \bar  v}^{(1)}, \mathbf{\bar v}^{( 2 )}, ... \}$. The control input $u$ is set to the first element of the computed sequence, namely $u = \kappa(x) := u_0^\star$, which implicitly also depends on the samples extracted to build the optimization program itself.

We next focus on a suitable choice for the sample size, so that the average fraction of \textit{closed-loop} constraint violations ``$x_1 \notin \X, x_2 \notin \X, \ldots, x_t \notin \X$'' is below the desired level $\epsilon$.
It follows from \cite[Section 3]{schildbach:fagiano:frei:morari:13} that this property is actually independent from the cardinalities of $\mathcal{I}_0$ and $\mathcal{I}_2$, i.e. on the number of samples used for the cost function and for the later stages. In fact, under proper assumptions introduced later on, the closed-loop behavior in terms of constraint violations is only influenced by the first-stage constraint, namely by the number $N$ of samples indexed in $\mathcal{I}_1$ \cite[Section 3]{schildbach:fagiano:frei:morari:13}. Without loss of generality, let $\mathcal{I}_1 := \Z[1,N]$ for ease of notation. We refer to \cite{zhang:grammatico:margellos:goulart:lygeros:14ifac} for a discussion on the role of $\mathcal{I}_0$ and $\mathcal{I}_2$ in terms of closed-loop performance. 

In particular, later on we show that our main results of Section \ref{sec:main-results} are directly applicable because the sampled nonlinear MPC program $\textup{SP}^{ \textup{MPC} }$ in \eqref{eq:RMPC} has non-convex cost, due to the nonlinear dynamics in \eqref{eq:nonlinear-system}, and convex first-stage constraint.
Since the program in \eqref{eq:RMPC} is non-convex, and hence the global optimizer is in general not computable exactly,
we adopt the following set-based definition of probability of violation.
\begin{definition}[First-stage probability of violation]
For given $x \in \X$ and $\U_0 \subseteq \U$, the first-stage probability of violation is given by 
\begin{equation*}
V^{ \textup{MPC} }( x, \U_0 ) := \sup_{u \in {\U}_0} \ \mathbb{P}\left( \left\{ v \in \mathcal{V} \mid f(x,v) + g(x,v) u \notin \X \right\}  \right).
\end{equation*}
\qed
\end{definition}

Analogously to Section \ref{sec:main-results}, see Remark \ref{rem:SPks}, we then consider $M$ directions $c_1, c_2, ..., c_M \in \R^m$, and an arbitrary  $\hat{u}_0 \in \R^m$. For instance, but not necessarily,
$\hat{u}_0$ may be a known robustly feasible solution.
For all $j \in \Z{[1,M]}$, we define the following \textsc{SP}, where $\mathbf{\bar v_0} := ( \bar v_0^{(1)}, ..., \bar v_0^{(N)})$.
\begin{equation}\label{eq:SPk-1}
\textup{SP}^{ 1}_j[\mathbf{\bar v_0}]: \ \left\{
\begin{array}{l}
\displaystyle \min_{ \lambda \in \R } \ -\lambda \\
\begin{array}{ll}
\textup{sub. to:}  & f(x, \bar v_0^{(i)}) + g(x, \bar v_0^{(i)}) (\hat{u}_0 + \lambda c_j )  \in \X  \quad \forall i \in \Z{[1,N]} \\
                           &  \hat{u}_0 + \lambda c_j\in \U.
\end{array}
\end{array}
\right.
\end{equation}
Let $\lambda_j^\star$ be the optimizer mapping of $\textup{SP}^{ 1}_j$. If $\textup{SP}^{ 1}_j[\mathbf{\bar v_0}]$ is not feasible, we use the convention that $\lambda_j^\star(\bar\omega) := \varnothing$.
For all the feasible problems $\textup{SP}^{ 1}_j[\mathbf{\bar v_0}]$, we define
\begin{equation} \label{eq:UM}
\U_M( \mathbf{ \bar  v_0} ) := \textup{conv}\left( \{ \hat{u}_0 + \lambda_1(\mathbf{ \bar v_0}) d_1, \hat{u}_0 + \lambda_2(\mathbf{ \bar v_0}) d_2, ..., \hat{u}_0 + \lambda_M(\mathbf{ \bar v_0}) d_M \} \right).
\end{equation}
Finally, we solve the following approximation of $\textup{SP}^{ \textup{MPC} }$ in \eqref{eq:RMPC}.
\begin{equation}\label{eq:RMPC-approx}
\tilde{\textup{SP}}^{ \textup{MPC} }[\mathbf{\bar v}^{(1)}, \mathbf{\bar v}^{( 2 )}, ...]: \ \left\{
\begin{array}{l}
\displaystyle \min_{ \mathbf{u} \in \U^N } \ \sum_{i \in \mathcal{I}_0} J( x, \mathbf{u}, \mathbf{\bar v}^{(i)} ) \\
\begin{array}{lll}
\textup{sub. to:} & u_0 \in \U_M(\mathbf{\bar v_0}) &  \\
 & \phi( k; x, { \mathbf{u} }, \mathbf{ \bar  v}^{(i)} ) \in \X &  \forall i \in \mathcal{I}_2, \ \forall k \in \Z{[2, K]}
\end{array}
\end{array}
\right.
\end{equation}
We can now characterize the required sample complexity for the probability of violation to be, with high confidence, below the desired level.
\begin{theorem} \label{th:N-MPC}
For all $x \in \X$ and $j \in \Z{[1,M]}$, let $\lambda_j^\star$ be the optimizer mapping of $\textup{SP}^1_j$ in \eqref{eq:SPk-1}, let ${\U}_M$ be as in \eqref{eq:UM}, and $\epsilon, \beta \in (0,1)$. Then
\begin{equation}
\mathbb{P}^{N} \left( \left\{ \mathbf{v_0} \in \mathcal{V}^N \mid V^{\textup{MPC}}( x, \U_M( \mathbf{v_0} ) ) > \epsilon  \right\} \right) \leq M \Phi\left( \frac{\epsilon}{ \min\{ m+1, M \} }, 1, N \right).
\end{equation}
Consequently, if
\begin{equation} \label{eq:K-MPC}
\displaystyle N \geq \frac{\frac{e}{e-1} \min\{ m+1, M\}}{\epsilon} \textup{ln}\left( \frac{M}{\beta} \right),
\end{equation}
then $\mathbb{P}^{N} \left( \left\{ \mathbf{v_0} \in \mathcal{V}^N \mid V^{\textup{MPC}}( x, \U_M( \mathbf{v_0} ) ) \leq \epsilon  \right\} \right) \geq 1-\beta$, i.e., with probability no smaller than $1-\beta$, 
any feasible solution of $\tilde{\textup{SP}}^{ \textup{MPC} }$ in \eqref{eq:RMPC-approx} satisfies the first state constraint in \eqref{eq:SMPC}.
\qed
\end{theorem}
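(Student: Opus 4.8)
The plan is to read Theorem \ref{th:N-MPC} as a direct specialization of Theorem \ref{th:bar-Phi} and Corollary \ref{cor:bar-Phi} to the first-stage MPC constraint, with the generic decision dimension $n$ replaced by the input dimension $m$ and with Helly's dimension fixed to $\zeta = 1$. First I would fix the (arbitrary) state $x \in \X$ and regard $u_0 \in \R^m$ as the decision variable, so that the randomness enters only through $\mathbf{v_0} \in \mathcal{V}^N$. The set-membership constraint ``$f(x,v) + g(x,v) u_0 \in \X$'' must be recast into the scalar convex form $\tilde g(u_0, v) \le 0$ required by Standing Assumption \ref{ass:standing}: since $\X$ is compact and convex, membership in $\X$ is encoded by a convex lower semicontinuous function of the state (for instance $\mathrm{dist}(\cdot,\X)$ or a support-function representation), and composing this with the affine map $u_0 \mapsto f(x,v) + g(x,v) u_0$ yields a function that is convex and lower semicontinuous in $u_0$ for each fixed $v$ and measurable in $v$ for each fixed $u_0$, the latter by the measurability assumption on $\delta \mapsto f(x,\delta)+g(x,\delta)$. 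Under this identification, $V^{\textup{MPC}}(x,\cdot)$ is exactly the probability of violation $V(\cdot)$ of Definition \ref{def:violation-probability} for this constraint.

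Next I would observe that each $\textup{SP}^1_j$ in \eqref{eq:SPk-1} optimizes over the scalar $\lambda \in \R$ along the one-dimensional affine line $\{\hat{u}_0 + \lambda c_j \mid \lambda \in \R\}$, exactly the construction \eqref{eq:Ck-best} of Remark \ref{rem:SPks}. Hence, independently of the complexity of $f$ and $g$, its Helly's dimension equals $\zeta = 1$ by \cite[Lemma 3.8]{schildbach:fagiano:morari:13}, and each optimizer $u_j^\star := \hat{u}_0 + \lambda_j^\star(\mathbf{v_0})\, c_j$ satisfies the convex scenario bound \eqref{eq:N-Campi-Calafiore} with the dimension parameter taken to be $\zeta = 1$. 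The set $\U_M(\mathbf{v_0})$ of \eqref{eq:UM} is then precisely the convex-hull set $\X_M$ of \eqref{eq:XM}, built from these $M$ one-dimensional optimizers, now living in $\R^m$.

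With these correspondences in place, the first inequality follows by applying Theorem \ref{th:bar-Phi} with $n$ replaced by $m$ and $\zeta = 1$: the union bound of Lemma \ref{lem:many-xis} over the $M$ optimizers, together with the convex-hull estimate of Theorem \ref{th:violation-convex-hull} (every point of the hull of $M$ vertices in $\R^m$ is a convex combination of at most $\min\{m+1,M\}$ of them, so $V(\U_M) \le \min\{m+1,M\}\, V(\{u_1^\star,\dots,u_M^\star\})$), gives
\begin{equation*}
\mathbb{P}^{N}\!\left( \left\{ \mathbf{v_0} \in \mathcal{V}^N \mid V^{\textup{MPC}}(x, \U_M(\mathbf{v_0})) > \epsilon \right\} \right) \le M\, \Phi\!\left( \frac{\epsilon}{\min\{m+1,M\}},\, 1,\, N \right).
\end{equation*}
The sample-size claim then follows from Corollary \ref{cor:bar-Phi} with the same substitutions: setting $\zeta = 1$ in \eqref{eq:our-sample-size} makes the term $\zeta-1$ vanish, leaving exactly the bound \eqref{eq:K-MPC}, and the feasibility interpretation — that any feasible $u_0 \in \U_M(\mathbf{v_0})$ of $\tilde{\textup{SP}}^{\textup{MPC}}$ in \eqref{eq:RMPC-approx} satisfies the first-stage state constraint — is inherited from that of $\tilde{\textsc{SP}}$ in \eqref{eq:SP-tilde}.

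The main obstacle I anticipate is not the probabilistic argument, which is a clean instantiation of Theorem \ref{th:bar-Phi}, but the structural and measure-theoretic verification that the set-membership constraint reduces faithfully to the scalar convex-constraint framework of Standing Assumption \ref{ass:standing}, uniformly in the fixed parameter $x$, and that the scalar optimizer mappings $\lambda_j^\star(\cdot)$ are well-defined and measurable so that the probabilities $\mathbb{P}^N(\cdot)$ above are meaningful; this is precisely the role played by the measurability results of Appendix \ref{app:measurability}.
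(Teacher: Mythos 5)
Your proposal is correct and follows essentially the same route as the paper's proof: exploit the one-dimensional decision variable $\lambda$ in each $\textup{SP}^1_j$ to get Helly's dimension $\zeta = 1$ and the per-optimizer bound $\Phi(\epsilon,1,N)$, combine the $M$ optimizers via Lemma \ref{lem:many-xis}, and then run the convex-hull/Carath\'eodory argument of Theorem \ref{th:bar-Phi} in $\R^m$ (with the later-stage constraints absorbed as deterministic constraints in the sense of Remark \ref{rem:general-CCP}) to obtain the factor $\min\{m+1,M\}$ and the sample size \eqref{eq:K-MPC}. Your explicit recasting of the set-membership constraint as a scalar convex constraint and your attention to measurability are consistent with what the paper handles implicitly via Remark \ref{rem:general-CCP} and Appendix \ref{app:measurability}.
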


The result of Theorem \ref{th:N-MPC} can be exploited to characterize the \textit{expected} closed-loop constraint violation as in \cite[Theorem 14]{schildbach:fagiano:frei:morari:13}, under the following assumption {\cite[Assumption 5]{schildbach:fagiano:frei:morari:13}}.
\begin{assumption}[Recursive feasibility] \label{ass:recursive-feasibility}
$\textup{SP}^{ \textup{MPC} }$ in \eqref{eq:RMPC} admits a feasible solution at every time step almost surely.
\qed
\end{assumption}
\begin{corollary}\label{cor:expected-violations}
Suppose Assumption \ref{ass:recursive-feasibility} holds. For all $x \in \X$ and $ \mathbf{ v} := \left(\mathbf{ v}^{(1)}, ..., \mathbf{ v}^{(N)} \right) \in \mathcal{V}^{KN} $, \\
let $\mathbf{u}(x) := \left( u_0(x), ..., u_{K-1}(x)\right)$ be any feasible solution of $\tilde{\textup{SP}}^{ \textup{MPC} }[\mathbf{ v}]$ in \eqref{eq:RMPC-approx}, and define $\kappa(x) := u_0(x)$. Let $\U_M(k;\mathbf{v} )$ be the set $\U_M( \mathbf{v_0} )$ in \eqref{eq:UM} with $\phi_{\kappa}( k; x, \mathbf{v} )$ in place of $x$.
If $N$ satisfies 
\begin{equation} \label{eq:admissible-K}
\int_0^{1} M \Phi\left( \frac{\nu}{ \min\{ m+1, M \} }, 1, N \right) d \nu 	\ \leq \ \epsilon,
\end{equation}
then\footnote{In \cite[Definition 12]{schildbach:fagiano:frei:morari:13}, a sample size $N$ is called \textit{admissible} if it satisfies $ \int_0^{1} \Phi( \nu, m, N ) d \nu \leq 	\epsilon $, which is the counterpart of \eqref{eq:admissible-K} for random convex programs. For given $\epsilon \in (0,1)$, $m, M>0$, an admissible $K$ satisfying \eqref{eq:admissible-K} can be evaluated via a numerical one-dimensional integration.}, for all $k \geq 0$ it holds that
\begin{equation*}
\displaystyle \mathbb{E}\left[ V^{ \textup{MPC} }\left( \phi_{\kappa}(k; x, \cdot), \U_M(k;\mathbf{\cdot}) \right)  \right]  :=  \int_{ \mathcal{V}^{(KN+1)k} } V^{ \textup{MPC} }\left( \phi_{\kappa}( k; x, \mathbf{v} ), \U_M(k; \mathbf{v} ) \right) \mathbb{P}^{ (KN+1)k }( \mathbf{dv} ) \leq \epsilon. 
\end{equation*}
{\hfill $\square$}
\end{corollary}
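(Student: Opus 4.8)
The plan is to reduce the closed-loop expected-violation bound to a pointwise application of Theorem~\ref{th:N-MPC}, following the scheme of \cite[Theorem 14]{schildbach:fagiano:frei:morari:13}. The key structural observation is that, at time step $k$, the set $\U_M(k;\mathbf{v})$ is built from the $N$ first-stage samples drawn \emph{afresh} at that step, whereas the reached state $\phi_\kappa(k;x,\mathbf{v})$ is a function only of the samples drawn at the previous steps $0,\ldots,k-1$; these two groups of coordinates of $\mathbf{v}$ are independent. I would therefore first condition on the realized state $x_k := \phi_\kappa(k;x,\mathbf{v})$ and treat the fresh time-$k$ samples as the only remaining randomness.

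First I would record the layer-cake identity $\E[Y]=\int_0^1 \mathbb{P}(Y>\nu)\,d\nu$, valid for any $[0,1]$-valued random variable $Y$; since $V^{\textup{MPC}}(\cdot,\cdot)$ is a supremum of probabilities it lies in $[0,1]$, so the identity applies to $Y:=V^{\textup{MPC}}(x_k,\U_M(k;\cdot))$ for each fixed $x_k\in\X$. Reading Theorem~\ref{th:N-MPC} with the free level $\nu\in(0,1)$ in place of $\epsilon$ gives, for every fixed $x_k$, the tail bound $\mathbb{P}^N(V^{\textup{MPC}}(x_k,\U_M(k;\cdot))>\nu)\le M\,\Phi(\nu/\min\{m+1,M\},1,N)$, whose right-hand side is \emph{uniform} in $x_k$. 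Integrating this over $\nu\in(0,1)$ and invoking the hypothesis \eqref{eq:admissible-K} yields $\E[V^{\textup{MPC}}(x_k,\U_M(k;\cdot))\mid x_k]\le\int_0^1 M\Phi(\nu/\min\{m+1,M\},1,N)\,d\nu\le\epsilon$, again uniformly in $x_k$. The tower property then removes the conditioning: averaging over the history samples that generate $\phi_\kappa(k;x,\cdot)$ preserves the bound, which is precisely the asserted inequality once the expectation is expanded as the integral over $\mathcal{V}^{(KN+1)k}$.

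The main obstacle is making this conditioning rigorous at the level of the product measure. I would need to verify that $\mathbf{v}\mapsto V^{\textup{MPC}}(\phi_\kappa(k;x,\mathbf{v}),\U_M(k;\mathbf{v}))$ is measurable---invoking the results of Appendix~\ref{app:measurability}---and to factor $\mathbb{P}^{(KN+1)k}$ into the ``history'' coordinates, which enter the integrand only through $x_k$, and the fresh time-$k$ coordinates, which enter only through $\U_M(k)$, so that Fubini legitimizes the inner integration over the latter for almost every value of the former. Recursive feasibility (Assumption~\ref{ass:recursive-feasibility}) is needed exactly here: it guarantees that $\kappa(x_t)$ is well defined at every step, so that the closed-loop trajectory, and hence $x_k$ and $\U_M(k)$, exist on a full-measure set and the integrand is defined $\mathbb{P}^{(KN+1)k}$-almost everywhere.
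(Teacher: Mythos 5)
Your proof is correct and takes essentially the same route as the paper: the paper's own proof is a one-line deferral to \cite[Section 4.2]{schildbach:fagiano:frei:morari:13}, whose underlying argument is exactly your combination of the layer-cake identity, the state-uniform tail bound of Theorem \ref{th:N-MPC} applied with the free level $\nu$ in place of $\epsilon$, and conditioning on the closed-loop state so that only the fresh first-stage samples remain as randomness. Your write-up simply makes explicit the supporting details (measurability via Appendix \ref{app:measurability}, the Fubini factorization, and the role of Assumption \ref{ass:recursive-feasibility} in ensuring the integrand is defined almost everywhere) that the paper leaves to the citation.
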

The meaning of Corollary \ref{cor:expected-violations} is that the expected closed-loop constraint violation, which can be also interpreted as time-average closed-loop constraint violation \cite[Section 2.1]{zhang:grammatico:margellos:goulart:lygeros:14ifac}, is upper bounded by the specified tolerance $\epsilon$ whenever the sample size $N$ satisfies \eqref{eq:admissible-K}. A similar result was recently shown in \cite[Section 4.2]{schildbach:fagiano:frei:morari:13} for uncertain linear systems and hence here extended to the class of uncertain nonlinear control-affine systems in \eqref{eq:nonlinear-system}.

Numerical simulations of the proposed stochastic nonlinear MPC approach are provided in \cite{zhang:grammatico:margellos:goulart:lygeros:14ifac} for a nonholonomic control-affine system, and the benefits with respect to stochastic linear MPC are shown therein.

\subsection{Other non-convex control-design problems}
Our scenario approach is suitable for many non-convex control-design problems, such as robust analysis and control synthesis \cite{alamo:tempo:camacho:09, alamo:tempo:luque:ramirez:13}. In particular,  in \cite{grammatico:zhang:margellos:goulart:lygeros:14acc} we address control-design via uncertain Bilinear Matrix Inequalities (BMIs) making comparison with the sample complexity based on statistical learning theory, recently derived in \cite{chamanbaz:13}.
Many practical control problems also rely on an uncertain non-convex optimization, for instance reserve scheduling of systems with high wind power generation \cite{vrakopoulou:margellos:lygeros:andersson:13}, aerospace control \cite{wang:stengel:05}, truss structures \cite{calafiore:dabbene:08}.
Other non-convex control problems that can be addressed via randomization arise in the control of switched systems \cite{ishii:basar:tempo:05}, network control \cite{alpcan:basar:tempo:05}, fault detection and isolation \cite{ma:sznaier:lagoa:07}.

\section{Conclusion} \label{sec:conclusion}
We have considered a scenario approach for the class of random non-convex programs with (possibly) non-convex cost, deterministic (possibly) non-convex constraints, and chance constraint containing functions with separable non-convexity. For this class of programs, Helly's dimension can be unbounded. We have derived probabilistic guarantees for all feasible solutions inside a convex set with a-priori chosen complexity, which affects the sample size merely logarithmically. 

Our scenario approach also extends to the case with mixed-integer decision variables.
We have applied our scenario approach to randomized Model Predictive Control for nonlinear control-affine systems with chance constraints, and outlined many non-convex control-design problems as potential applications.

Finally, we have addressed the measure-theoretic issues regarding the measurability of the optimal value and optimal solutions of random (convex and non-convex) programs, including the well-definiteness of the probability integrals. 

\section*{Acknoledgements}
The authors would like to thank Marco Campi, Simone Garatti, Georg Schildbach for fruitful discussions on related topics. 
Research partially supported by Swiss Nano-Tera.ch under the project HeatReserves.

\appendices

\section{Counterexample with unbounded number of support constraints} \label{app:example}
We present an SP, derived from a CCP of the form \eqref{eq:CCP-general}, in which Helly's dimension \cite[Definition 3.1]{calafiore:10} cannot be bounded. Namely, the number of constraints (``support constraints'' \cite[Definition 2.1]{calafiore:10}) needed to characterize the global optimal value equals the number $N$ of samples.
\begin{equation} \label{eq:counterexample}
\textup{SP}_{\textup{ex}}[\bar\omega]: \ \left\{
\begin{array}{l}
\displaystyle \min_{ (x,y,z) \in \R^3 } \ z \\
\begin{array}{ll}
\textup{sub. to: } & z \geq -\sqrt{x^2 + y^2} \\
 & z \geq \cos( \bar\delta^{(i)} ) x + \sin( \bar\delta^{(i)} ) y -1 \quad \forall i \in \Z{[ 1,N ]}.
\end{array}
\end{array}
\right.
\end{equation}
The problem can be also written in the form \eqref{eq:SP}, with non-convex cost $J(x,y) := -\sqrt{x^2 + y^2}$ and non-convex constraints $-\sqrt{x^2 + y^2} \geq \cos( \bar\delta^{(i)} ) x + \sin( \bar\delta^{(i)} ) y -1$. We use the form in \eqref{eq:counterexample} to visualize the optimizing direction $-z$, as shown in Figure \ref{fig:SP1}.

Let the drawn samples be $\bar\delta^{(i)} = (i-1)\frac{2 \pi}{N}$, for $i = 1, 2, ..., N$. Namely, we divide the $2 \pi$-angle into $N$ parts, so that $\bar\delta^{(1)} = 0$ and $\bar\delta^{(i+1)} = \bar\delta^{(i)} + \frac{2 \pi}{N}$ for all $i \in \Z{[1,N-1]}$. We take $N \geq 5$ as $\frac{2 \pi}{N} \in (0, \pi/2)$ simplifies the analysis.

\begin{figure} 
\begin{center}
\includegraphics[width = 0.8 \columnwidth]{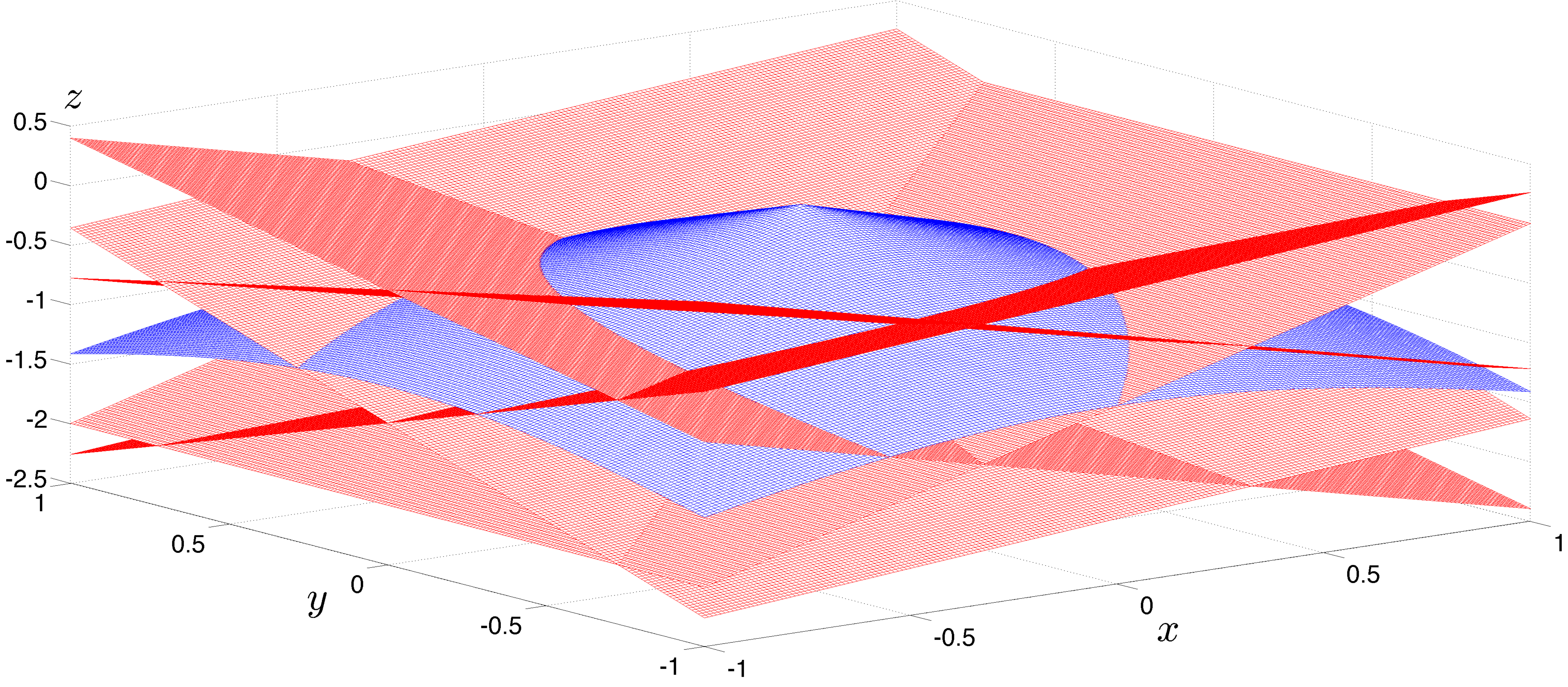}
\end{center}
\caption{The constraints of the problem $\textup{SP}_{\textup{ex}}[\bar\omega]$ with $N=5$ are represented. The blue surface is the set of points such that $z = -\sqrt{x^2 + y^2}$, while the red hyperplanes are the sets of points such that $z = \cos( \bar\delta^{(i)} ) x + \sin( \bar\delta^{(i)} ) y -1$, for $i = 1, 2, ..., 5$. The feasible set is the region above the plotted surfaces and the minimization direction is the vertical one, pointing down.}
\label{fig:SP1}
\end{figure}

We show that all the sampled constraints $z \geq \cos( \bar\delta^{(i)} ) x + \sin( \bar\delta^{(i)} ) y$, for $i = 1, 2, ..., N$, are support constraints, making it impossible to bound Helly's dimension by some $\zeta < N$.

We first compute the optimal value $ J_{ \textup{ex} }^{\star}[\bar\omega]$ of $\textup{SP}_{\textup{ex}}[\bar\omega]$ in \eqref{eq:counterexample}. By symmetry and regularity arguments (i.e. continuity of both the objective function and the constraints in the decision variable), an optimizer $(x_N^\star, y_N^\star, z_N^\star)$ can be computed as the intersection of any two adjacent hyperplanes, say $\left\{(x,y,z) \in \R^3 \mid z = \cos( \bar\delta^{(i)} ) x + \sin( \bar\delta^{(i)} ) y -1 \right\}$ for $i = 1, 2$, and the surface $\left\{ (x,y,z) \in \R^3 \mid z = -\sqrt{ x^2 + y^2 } \right\}$. Since $\bar\delta^{(1)} = 0$ and $\bar\delta^{(2)} = \frac{2 \pi}{N} =: \theta_N \in (0, \pi/2)$, the optimal value and an optimizer can be computed by solving the system of equations:
\begin{equation}\label{eq:system-counterexample}
z = -\sqrt{ x^2 + y^2 } = x-1 = \cos(\theta_N) x + \sin( \theta_N ) y - 1.
\end{equation}
From the second and the third equations of \eqref{eq:system-counterexample}, we get that $y = \frac{\sin(\theta_N)}{ 1+\cos(\theta_N) } x$ and hence from the first equation of \eqref{eq:system-counterexample} we finally get: $\left( \frac{\sin(\theta_N)}{ 1+\cos(\theta_N) } \right)^2 x^2 + 2x -1 = 0$. Therefore an optimizer is
\begin{equation}\label{eq:solution-counterexample}
x_N^\star = \frac{ \sqrt{1+\left( \frac{\sin(\theta_N)}{ 1+\cos(\theta_N) } \right)^2} -1 }{\left( \frac{\sin(\theta_N)}{ 1+\cos(\theta_N) } \right)^2}, \ y_N^\star = \frac{\sin( \theta_N )}{1+\cos(\theta_N)} x_N^\star, \ z_N^\star = x_N^\star - 1
\end{equation}
and the optimal cost is $J_{\textup{ex}}^\star[\bar\omega] = z_N^\star$.

We then remove the sample $\bar\delta^{(2)} = \frac{2 \pi}{N}$, and hence consider the problem $\textup{SP}_{\textup{ex}}[\bar\omega \setminus \bar\delta^{(2)}]$. The optimizer is now unique and lies in the intersection of the hyperplanes \\
$\left\{(x,y,z) \in \R^3 \mid z = \cos( \bar\delta^{(i)} ) x + \sin( \bar\delta^{(i)} ) y -1 \right\}$, for $i = 1, 3$, and the surface\\
$\left\{ (x,y,z) \in \R^3 \mid z = -\sqrt{ x^2 + y^2 } \right\}$. We just need to solve the system of equations \eqref{eq:system-counterexample}, but with $\bar\delta^{(3)} := 2 \theta_N = \frac{4 \pi}{N}$ in place of $\theta_N$ in the third equation. Therefore we obtain almost the same solution in \eqref{eq:system-counterexample}, but with $2 \theta_N$ in place of $\theta_N$. Since the optimal cost 
$$J_{\textup{ex}}^\star[\bar\omega \setminus \bar\delta^{(2)}] = \frac{ \sqrt{1+\left( \frac{\sin(2 \theta_N)}{ 1+\cos(2 \theta_N) } \right)^2} -1 }{\left( \frac{\sin(2 \theta_N)}{ 1+\cos(2 \theta_N) } \right)^2} - 1$$ is strictly smaller than $J_{\textup{ex}}^\star[\bar\omega]$ (as $x_{N+1}^\star < x_N^\star$ for all $N \geq 5)$, it follows that the constraint associated with $\bar\delta^{(2)}$ is a support constraint. Figure \ref{fig:SP2} shows the optimizer of the problem $\textup{SP}_{\textup{ex}}[\bar\omega \setminus \{\bar\delta^{(2)} \}]$.
\begin{figure} 
\begin{center}
\includegraphics[width = 0.8 \columnwidth]{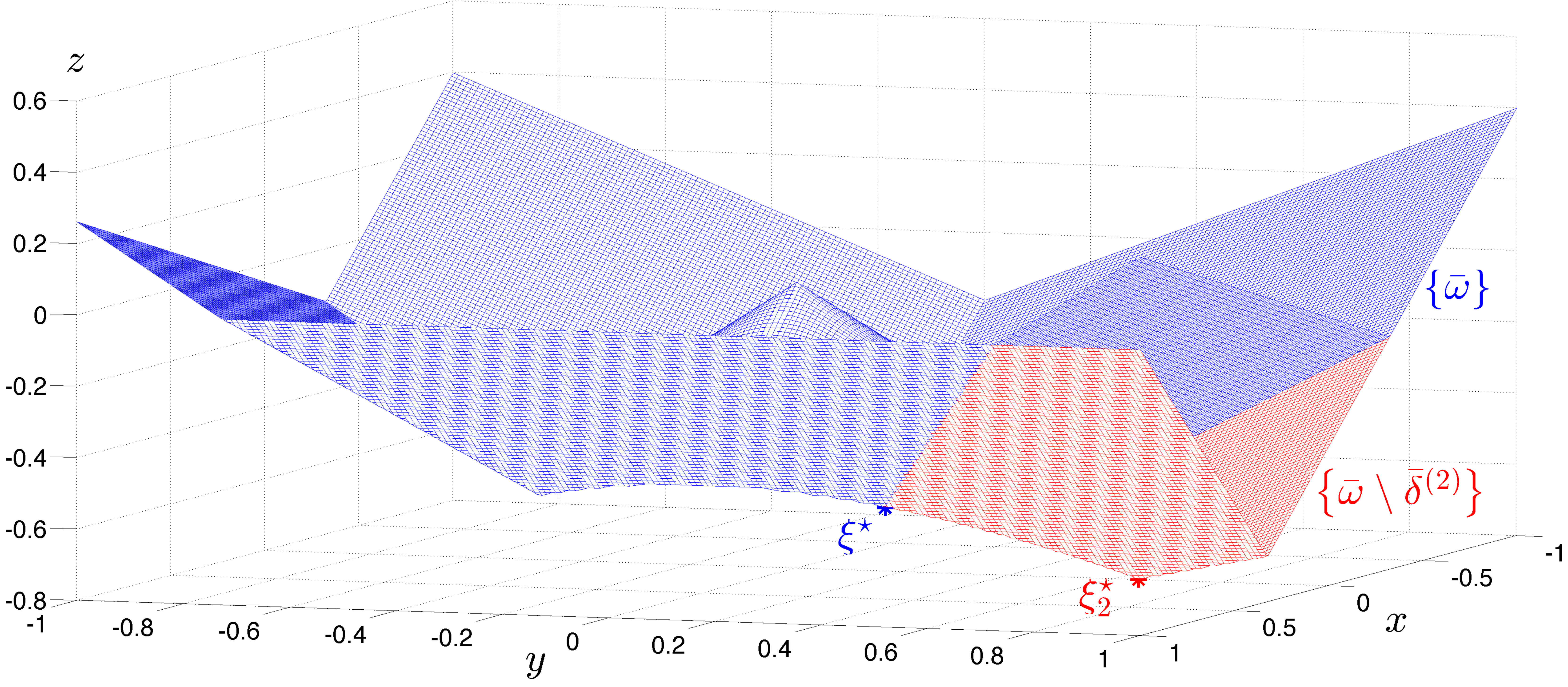}
\end{center}
\caption{The constraints of the problem $\textup{SP}_{\textup{ex}}[\bar\omega \setminus \bar\delta^{(2)}]$ with $N=5$ are represented. The feasible set is the region above the blue surface, which is the set of points such that $z = \max_{ i \in \{1, 3, 4, 5\} }\{ -\sqrt{x^2 + y^2}, \cos( \bar\delta^{(i)} ) x + \sin( \bar\delta^{(i)} ) y -1 \}$. The red dot represents the optimizer $\left( x^\star, y^\star, z^\star\right)$ which has a cost $J\left( x^\star, y^\star, z^\star\right) = J_{\textup{ex}}^\star[\bar\omega \setminus \bar\delta^{(2)}] < J_{\textup{ex}}^\star[\bar\omega]$.}
\label{fig:SP2}
\end{figure}
Because of the symmetry of the problem with respect to rotations around the $z$-axis, we conclude that all the $N$ affine constraints $z \geq \cos( \bar\delta^{(i)} ) x + \sin( \bar\delta^{(i)} ) y -1$, for $i = 1, 2, ..., N$, are support constraints as well, i.e. $J_{\textup{ex}}^\star[\bar\omega \setminus \bar\delta^{(i)}] < J_{\textup{ex}}^\star[\bar\omega]$ for all $i \in \Z[1,N]$. This proves that Helly's dimension cannot be upper bounded by some a-priori fixed $\zeta < N$.

Finally, in view of \cite[Theorem 1]{campi:garatti:08}, it suffices to find at least one probability measure so that the extraction of the above samples $\bar\delta^{(1)}, \bar\delta^{(2)}, \ldots, \bar\delta^{(N)}$ happens with non-zero probability. For instance, this holds true if $\mathbb{P}$ is such that $\mathbb{P}( \{ \bar\delta^{(i)} \} ) = 1/N$ for all $i \in \Z[1,N]$. 
Moreover, it is also possible to have a distribution about the above points $\bar\delta^{(1)}, \bar\delta^{(2)}, \ldots, \bar\delta^{(N)}$ that has a density, but is narrow enough to preserve the property that $J^\star[\bar\omega \setminus \bar\delta^{(2)}] < J^\star[\bar\omega]$.


\section{Proofs} \label{app:proofs}

\subsection*{Proof of Theorem \ref{th:violation-convex-hull}}
Let $\mc{X}_\epsilon := \left\{ x \in \mc{X} \mid \mathbb{P}\left( \{ \delta \in \Delta \mid g(x,\delta)\leq 0 \}\right) \geq 1-\epsilon \right\}$ be the feasibility set of $\textup{CCP}(\epsilon)$ in \eqref{eq:CCP}.
Take any arbitrary $y \in \textup{conv}\left(\mc{X}_{\epsilon}\right)$. It follows from Caratheodory's Theorem \cite[Theorem 17.1]{rockafellar} that there exist $x_1, x_2, ..., x_{n+1} \in \mc{X}_{\epsilon}$ such that $y \in \textup{conv}\left( \left\{ x_1, x_2, ..., x_{n+1} \right\} \right)$, i.e. $y = \sum_{i=1}^{n+1} \alpha_i x_i$ for some $\alpha_1, \alpha_2, ..., \alpha_{n+1} \in [0,1]$ such that $\sum_{i=1}^{n+1} \alpha_i = 1$. 

In the following inequalities, we exploit the convexity of the mapping $x \mapsto g(x,\delta)$ for each fixed $\delta \in \Delta$ from Standing Assumption \ref{ass:standing}.
\begin{equation} \label{eq:inequalities-violation-convex-hull}
\begin{array}{l}
\mathbb{P}\left( \left\{ \delta \in \Delta \mid g(y,\delta)>0 \right\} \right) \ = \ \mathbb{P}\left( \left\{ \delta \in \Delta \mid g( \sum_{i=1}^{n+1} \alpha_i x_i ,\delta)>0 \right\} \right) \\
\leq \ \mathbb{P}\left( \left\{ \delta \in \Delta \mid \sum_{i=1}^{n+1} \alpha_i g(  x_i ,\delta)>0 \right\} \right) \ \leq \ \mathbb{P}\left( \left\{ \delta \in \Delta \mid \max_{i \in \Z{[1,n+1]}} \alpha_i g(  x_i ,\delta)>0 \right\} \right) \ = \\
 \mathbb{P}\left( \bigcup_{i=1}^{n+1} \left\{ \delta \in \Delta \mid g(  x_i ,\delta)>0 \right\} \right) \ \leq \ \sum_{i=1}^{n+1} \mathbb{P}\left( \left\{ \delta \in \Delta \mid g(  x_i ,\delta)>0 \right\} \right) \ \leq \ (n+1)\epsilon.
\end{array}
\end{equation}
The last inequality follows from the fact that $x_1, x_2, ... , x_{n+1} \in \mc{X}_{\epsilon}$. 

Since $y \in \textup{conv}\left(\mc{X}_{\epsilon}\right)$ has been chosen arbitrarily, it follows that $V( \textup{conv}\left(\mc{X}_{\epsilon}\right) ) \leq (n+1) \epsilon$.
\hfill $\blacksquare$

\subsection*{Proof of Lemma \ref{lem:many-xis}}
\noindent $\textstyle \mathbb{P}^N\left( \left\{ \omega \in \Delta^N \mid V\left( \{ x_1^\star(\omega), ..., x_M^\star(\omega) \}\right) > \epsilon  \right\} \right) =
 \textstyle \mathbb{P}^N\left( \bigcup_{j=1}^M \left\{ \omega \in \Delta^N \mid V\left( \{ x_j^\star(\omega)\}\right) > \epsilon  \right\} \right) \leq  \\
\textstyle \sum_{k=1}^M \mathbb{P}^N\left(  \left\{ \omega \in \Delta^N \mid V\left( \{ x_k^\star(\omega)\}\right) > \epsilon  \right\} \right) \leq \sum_{k=1}^M \beta_k
$, where the last inequality follows from Assumption \ref{ass:many-xis}.
\hfill $\blacksquare$


\subsection*{Proof of Theorem \ref{th:bar-Phi}}

For all $\omega \in \Delta^N$, from the definition of the supremum $V(\X) = \sup_{ x \in \X_M(\omega) } V(\{ x \})$ it holds that for all $\epsilon' >0$ there exists $\xi_M^\star(\omega) \in \X_M(\omega) =  \textup{conv}\left(\{x_1^\star(\omega), x_2^\star(\omega), ..., x_M^\star(\omega) \}\right)$ such that
\begin{equation}\label{eq:sup-V}
V(\X_M(\omega)  ) = \sup_{ x \in \X_M(\omega) } V(\{ x \}) < V( \xi_M^\star(\omega) ) + \epsilon' .
\end{equation}

Now, for all $\omega \in \Delta^N$, we denote by $\mc{I}(\omega) \subset \Z{[1,M]} $ the set of indices of cardinality $|\mc{I}(\omega)| = \min\{n+1, M\}$, with ``minimum lexicographic order''\footnote{With ``minimum lexicographic order'' we mean the following ordering: $ \{i_1, i_2, ..., i_n\} < \{ j_1, j_2, ..., j_n \}$ if there exists $k \in \Z{[1,n]}$ such that $i_1 = j_1$, ..., $i_{k-1} = j_{k-1}$, and $i_k < j_k$.}, such that we have the inclusion $\xi_M^\star(\omega) \in \textup{conv}\left(\{x_j^\star(\omega) \mid j \in \mathcal{I}(\omega) \}\right)$. Since ${\X}_M(\omega)$ is convex and compact, it follows from Caratheodory's Theorem \cite[Theorem 17.1]{rockafellar} that such a set of indices $\mathcal{I}(\omega)$ always exists. It also follows that there exists a unique set of coefficients $\alpha_1(\omega), \alpha_2(\omega), ..., \alpha_{n+1}(\omega) \in [0,1]$ such that \\
$\sum_{ j \in \mc{I}(\omega) } \alpha_j(\omega) = 1 $ and
\begin{equation} \label{eq:xi-star}
\xi_M^\star(\omega) = \sum_{j \in \mc{I}(\omega)} \alpha_j(\omega) x_j^\star(\omega).
\end{equation}

In the following inequalities, we exploit \eqref{eq:sup-V}, \eqref{eq:xi-star} and the convexity of the mapping $x \mapsto g(x,\delta)$ for each fixed $\delta \in \Delta$ from Standing Assumption \ref{ass:standing}, and we can take $\epsilon' \in (0, \epsilon)$ without loss of generality.
\begin{equation} \label{eq:inequalities-to-barPhi}
\begin{array}{l}
\displaystyle \mathbb{P}^N\left( \left\{ \omega \in \Delta^N \mid V\left( {\X}_M(\omega) \right) > \epsilon  \right\} \right) 
\displaystyle = \ \textstyle \mathbb{P}^N\left( \left\{ \omega \in \Delta^N \mid \sup_{ x \in \X_M(\omega) } V( \{x\} )  > \epsilon  \right\} \right)  \\
\displaystyle \leq \ \mathbb{P}^N\left( \left\{ \omega \in \Delta^N \mid {V}\left( \{\xi_M^\star(\omega)\} \right) > \epsilon - \epsilon' \right\}  \right) \\
\displaystyle =  \ \textstyle \mathbb{P}^N\left( \left\{ \omega \in \Delta^N \mid \mathbb{P}\left( \left\{ \delta \in \Delta \mid g\left( \sum_{j \in \mc{I}(\omega)} \alpha_j(\omega) x_j^\star(\omega), \delta \right) > 0 \right\} \right)      > \epsilon - \epsilon'  \right\} \right)  \\
\displaystyle \leq \ \textstyle \mathbb{P}^N\left( \left\{ \omega \in \Delta^N \mid \mathbb{P}\left( \left\{ \delta \in \Delta \mid \sum_{j \in \mc{I}(\omega)} \alpha_j(\omega) g\left( x_j^\star(\omega), \delta \right) > 0 \right\} \right)      > \epsilon - \epsilon'  \right\} \right) \\
\displaystyle  \leq \ \textstyle \mathbb{P}^N\left( \left\{ \omega \in \Delta^N \mid \mathbb{P}\left( \left\{ \delta \in \Delta \mid \max_{j \in \mc{I}(\omega)} g\left( x_j^\star(\omega), \delta \right) > 0 \right\} \right)      > \epsilon - \epsilon' \right\} \right) \\
\displaystyle  = \ \textstyle \mathbb{P}^N\left( \left\{ \omega \in \Delta^N \mid \mathbb{P}\left( \bigcup_{j \in \mc{I}(\omega)} \left\{ \delta \in \Delta \mid g\left( x_j^\star(\omega), \delta \right) > 0 \right\} \right)      > \epsilon - \epsilon'  \right\} \right) \\
\displaystyle  \leq \ \textstyle \mathbb{P}^N\left( \left\{ \omega \in \Delta^N \mid \sum_{j \in \mc{I}(\omega)} \mathbb{P}\left( \left\{ \delta \in \Delta \mid g\left( x_j^\star(\omega), \delta \right) > 0 \right\} \right)      > \epsilon - \epsilon'  \right\} \right) \\
\displaystyle  \leq \ \textstyle \mathbb{P}^N\left( \left\{ \omega \in \Delta^N \mid \max_{j \in \mc{I}(\omega)} \mathbb{P}\left( \left\{ \delta \in \Delta \mid g\left( x_j^\star(\omega), \delta \right) > 0 \right\} \right)      >   \frac{\epsilon - \epsilon'}{n+1}  \right\} \right) \\
\displaystyle  = \ \textstyle \mathbb{P}^N\left( \left\{ \omega \in \Delta^N \mid {V}\left( \{ x_j^\star(\omega) \mid j \in \mc{I}(\omega) \} \right)      >   \frac{\epsilon - \epsilon'}{n+1}  \right\} \right) \\
  \leq \ \textstyle \mathbb{P}^N\left( \left\{ \omega \in \Delta^N \mid {V}\left( \{ x_1^\star(\omega), x_2^\star(\omega), ..., x_M^\star(\omega) \} \right)      >   \frac{\epsilon - \epsilon'}{n+1}  \right\} \right).
\end{array}
\end{equation}

Since for all $k \in \Z{[1,M]}$, $x^\star_k(\cdot)$ is the optimizer mapping of $\textsc{SP}_k[\cdot]$ in \eqref{eq:SPk}, from \cite[Theorem 1]{campi:garatti:08}, \cite[Theorem 3.3]{calafiore:10} we have that 
$
\mathbb{P}^N\left( \{ \omega \in \Delta^N \mid V( \{ x_k^\star(\omega) \} ) > \epsilon \} \right) \leq \Phi( \epsilon, n, N ).
$
We now use Lemma \ref{lem:many-xis} with $ \beta_k := \Phi\left( \frac{\epsilon - \epsilon'}{n+1}, n, N \right)$ for all $k \in \Z{[1,M]}$, so that, 
for all $\epsilon' > 0$, we get
\begin{multline*}
\displaystyle \mathbb{P}^N\left( \left\{ \omega \in \Delta^N \mid V\left( {\X}_M(\omega) \right) > \epsilon  \right\} \right) \leq \\
\textstyle \mathbb{P}^N\left( \left\{ \omega \in \Delta^N \mid {V}\left( \{ x_1^\star(\omega), x_2^\star(\omega), ..., x_M^\star(\omega) \} \right)      >   \frac{\epsilon - \epsilon'}{n+1}  \right\} \right) \leq  M \Phi\left( \frac{\epsilon - \epsilon'}{n+1}, n, N \right)
\end{multline*}
Then, since for all $n, N \geq 1$ the mapping $\epsilon \mapsto \Phi(\epsilon, n, N)$ is continuous, we have that \\
$\limsup_{\epsilon' \rightarrow 0} M \Phi\left( \frac{\epsilon - \epsilon'}{n+1}, n, N \right) = \lim_{\epsilon' \rightarrow 0} M \Phi\left( \frac{\epsilon - \epsilon'}{n+1}, n, N \right) = M \Phi\left( \frac{\epsilon}{n+1}, n, N \right)$, which proves \eqref{eq:bar-Phi-M}.
\hfill $\blacksquare$

\subsection*{Proof of Corollary \ref{cor:bar-Phi-best}}
It follows from Carath\'eodory's Theorem \cite[Theorem 17.1]{rockafellar} that, for each $\omega \in \Delta^N$,  there exist the sets $\X_M^{(i)}(\omega) := \textup{conv}\left( \left\{ x_k^\star(\omega) \mid k \in \mc{I}_i \right\} \right)$, for $i = 1, 2, ... , {M \choose n+1}$, where each $\mc{I}_i$ is a set of indices of cardinality $n+1$, such that $\X_M(\omega) = \bigcup_{i=1}^{ {M \choose n+1} } \X_M^{(i)}(\omega)$. Therefore we can write
\begin{equation*}
\begin{array}{l}
\textstyle \mathbb{P}^N\left( \left\{ \omega \in \Delta^N \mid \sup_{x \in \X_M(\omega)} V( \{ x\} ) > \epsilon \right\} \right)  \\
=  \ \textstyle \mathbb{P}^N\left( \left\{ \omega \in \Delta^N \mid \max_{i \in \Z\left[1, {M \choose n+1}\right]} \ \sup_{x \in \X_M^{(i)}(\omega)} V( \{ x\} ) > \epsilon  \right\} \right) \\
=  \ \textstyle \mathbb{P}^N\left( \bigcup_{i=1}^{{M \choose n+1}} \left\{ \omega \in \Delta^N \mid \sup_{x \in \X_M^{(i)}(\omega)} V( \{ x\} ) > \epsilon  \right\} \right) \\
\leq  \ \textstyle \sum_{i=1}^{{M \choose n+1}} \mathbb{P}^N\left( \left\{ \omega \in \Delta^N \mid \sup_{x \in \X_M^{(i)}(\omega)} V( \{ x\} ) > \epsilon  \right\} \right) \\
\leq \  \textstyle  {M \choose n+1} \mathbb{P}^N\left( \left\{ \omega \in \Delta^N \mid \sup_{x \in \X_M^{(1)}(\omega)} V( \{ x\} ) > \epsilon  \right\} \right),
\end{array}
\end{equation*}
where in the last inequality we consider the first set of indices without loss of generality, similarly to \cite[Proof of Theorem 3.3, pag. 3435]{calafiore:10}.
It follows from Theorem \ref{th:bar-Phi} and \cite[Equation (24)]{calafiore:lyons:13} that for all $\epsilon' >0$ there exists $\xi^\star(\omega) \in \X_M^{(1)}(\omega) = \textup{conv}\left( \left\{ x_1^\star(\omega), x_2^\star(\omega), ...,  x_{n+1}^\star(\omega) \right\}\right)$ such that 
\begin{multline*}
\mathbb{P}^N\left( \left\{ \omega \in \Delta^N \mid V( \X_M(\omega) ) > \epsilon \right\} \right) \leq  
\\ {M \choose n+1} \mathbb{P}^N\left( \left\{ \omega \in \Delta^N \mid V( \{ \xi^\star(\omega) \} ) > \epsilon - \epsilon'  \right\} \right) \leq {M \choose n+1} \Phi\left( \epsilon - \epsilon', \zeta (n+1), N \right)
\end{multline*}
and hence, after taking the $\limsup_{\epsilon' \rightarrow 0}$ on both sides of the inequality, we finally get the inequality
$$ \mathbb{P}^N\left( \left\{ \omega \in \Delta^N \mid V( \X_M(\omega) ) > \epsilon \right\} \right) \leq {M \choose n+1} \Phi\left( \epsilon, \zeta(n+1), N \right).$$
\hfill $\blacksquare$

\subsection*{Proof of Corollary \ref{cor:bar-Phi}}
If follows from \eqref{eq:bar-Phi-M} that we need to find $N$ such that $\Phi\left( \frac{\epsilon}{n+1}, n, N \right) < \beta/M$.
The proof follows similarly to \cite[Proof of Theorem 3]{alamo:tempo:luque:ramirez:13}.
\hfill $\blacksquare$

\subsection*{Proof of Theorem \ref{th:bar-Phi-1}}
The proof is similar to the proof of Theorem \ref{th:bar-Phi}.
\hfill $\blacksquare$

\subsection*{Proof of Corollary \ref{cor:bar-Phi-1}}
The proof is similar to the proof of Corollary \ref{cor:bar-Phi}.
\hfill $\blacksquare$


\subsection*{Proof of Theorem \ref{th:N-MPC}}
For each $j \in \Z{[1,M]}$, we consider the random convex problem $\textup{SP}_j^1[\cdot]$ in \eqref{eq:SPk-1}, with unique optimizer mapping $\lambda_j^\star(\cdot)$. Since the dimension of the decision variable is $1$, i.e. $u_j^\star(\cdot) := \hat{u}_0 + \lambda_j^\star(\cdot) c_j$, it follows from \cite[Theorem 1]{campi:garatti:08}, \cite[Theorem 3.3]{calafiore:10} that, for all $j \in \Z{[1,M]}$, we have 
$$ \mathbb{P}^N\left( \left\{ \omega \in \mc{V}^N \mid  V^{ \textup{MPC} }( \{ u_j^\star(\omega) \}) > \epsilon  \right\} \right) \leq \Phi( \epsilon, 1, N ).$$

Then, from Lemma \ref{lem:many-xis} we have that: \\
$ \mathbb{P}^N\left( \left\{ \omega \in \mc{V}^N \mid  V^{ \textup{MPC} }( \{ u_1^\star(\omega), u_2^\star(\omega), ..., u_M^\star(\omega) \}) > \epsilon  \right\} \right) \leq M \Phi\left( \epsilon, 1, N \right)$. 

We now notice that the CCP in \eqref{eq:SMPC} is of the same form of \eqref{eq:CCP-general}, with the constraints \\
$\mathbb{P}^{k} \left( \left\{ \mathbf{v} \in \mathcal{V}^k \mid \phi\left(k; x, \mathbf{\cdot}, \mathbf{v}  \right) \notin \X  \right\} \right) \leq \epsilon$, for $k \geq 2$, in place of $h(\cdot) \leq 0$. Therefore to conclude the proof we just have to follow the steps of Remark \ref{rem:general-CCP} and the proof of Theorem \ref{th:bar-Phi} with $\{u_1^\star(\omega), \ldots u_M^\star(\omega)\}$ in place of $\{x_1^\star(\omega), \ldots x_M^\star(\omega)\}$, and finally derive the sample size $N$ according to \eqref{eq:admissible-K}.
\hfill $\blacksquare$

\subsection*{Proof of Corollary \ref{cor:expected-violations}}
Since the sample size $N$ satisfies \eqref{eq:admissible-K}, the proof follows from \cite[Section 4.2]{schildbach:fagiano:frei:morari:13}.
\hfill $\blacksquare$

\section{Measurability of optimal value and of optimal solutions} \label{app:measurability}
In this section, we adopt the following notion of measurability from \cite[Section 2]{grammatico:subbaraman:teel:13}. Let $\mathcal{B}(\R^n)$ denote the Borel field, the subsets of $\R^n$ generated from open subsets of $\R^n$ through complements and finite countable unions. A set $F \subset \R^n$ is measurable if $F \in \mathcal{B}(\R^n)$. A set-valued mapping $M: \mathbb{R}^n \rightrightarrows \R^m$ is measurable \cite[Definition 14.1]{rockafellar:wets} if for each open set $\mathcal{O} \subset \R^m$ the set $M^{-1}( \mathcal{O} ) := \{ v \in \R^n \mid M(v) \cap \mathcal{O} \neq \varnothing \}$ is measurable. When the values of $M$ are closed, measurability is equivalent to $M^{-1}( \mathcal{C} )$ being measurable for each closed set $\mathcal{C} \in \R^m$ \cite[Theorem 14.3]{rockafellar:wets}. 
Let $\left( \Omega, \mathcal{F}, \mathbb{P}\right)$ be a probability space, where $\mathbb{P}$ is a probability measure on $\R^n$. A set $F \subset \R^n$ is universally measurable if it belongs to the Lebesgue completion of $\mathcal{B}(\R^n)$. A set-valued mapping $M: \mathbb{R}^n \rightrightarrows \R^m$ is universally measurable if the set $M^{-1}( \mathcal{S} )$ is universally measurable for all $\mathcal{S} \in \mathcal{B}(\R^m)$ \cite[Section 7.1, pag. 68]{bogachev2}. If $\varphi: \R^m \rightarrow \R \cup \{ \pm \infty \}$ is a (universally) measurable function, then the integral $I[\varphi] := \int_{ \R^m } \varphi( \omega) \mathbb{P}( d \omega ) $ is (nearly) well defined \cite[Chapter 14, pag. 643]{rockafellar:wets}.



The following result shows (near) well definiteness of the stated probability integrals.
\begin{theorem} \label{th:integrals-well-defined}
For all $x \in \mc{X}$, the probability integral $\mathbb{P}\left( \left\{ \delta \in \Delta \mid g(x,\delta)\leq 0 \right\} \right) $ is well defined.
For any measurable set-valued mapping $\X: \Delta^N \rightrightarrows \R^n$ and $\epsilon \in (0,1)$, the probability integral $\mathbb{P}^N\left( \left\{ \omega \in \Delta^N \mid V( \X(\omega) ) > \epsilon \right\} \right)$ is nearly well defined.
\qed
\end{theorem}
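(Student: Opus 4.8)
The plan is to treat the two claims separately, since they need different amounts of machinery. The first is immediate: by Standing Assumption \ref{ass:standing} the map $\delta \mapsto g(x,\delta)$ is measurable for each fixed $x \in \mc{X}$, so $\{\delta \in \Delta \mid g(x,\delta) \leq 0\}$ is the preimage of $(-\infty,0]$ under a measurable function and hence lies in $\mc{F}$; its probability is therefore well defined.

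For the second claim I would first upgrade the separate regularity of $g$ to \emph{joint} measurability on $\R^n \times \Delta$. Lower semicontinuity of $x \mapsto g(x,\delta)$ together with measurability of $\delta \mapsto g(x,\delta)$ makes $g$ a normal integrand in the sense of \cite[Chapter 14]{rockafellar:wets}, and normal integrands are jointly $\mc{B}(\R^n) \otimes \mc{F}$-measurable. Consequently $\{(x,\delta) \mid g(x,\delta) > 0\}$ is jointly measurable, and applying Tonelli's theorem to its indicator shows that the violation function $v(x) := \mathbb{P}(\{\delta \in \Delta \mid g(x,\delta) > 0\})$ is a measurable function of $x$; in particular $\{x \mid v(x) > \epsilon\}$ is measurable.

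The key step is then to rewrite the event as a projection. Since $V(\X(\omega)) = \sup_{x \in \X(\omega)} v(x)$ and the inequality $\sup > \epsilon$ is equivalent to the existence of an $x$ attaining $v(x) > \epsilon$, we have
\[
\{\omega \in \Delta^N \mid V(\X(\omega)) > \epsilon\} = \textup{proj}_{\Delta^N}(G), \quad G := \textup{gph}(\X) \cap (\Delta^N \times \{x \mid v(x) > \epsilon\}).
\]
The set $G$ is measurable: $\textup{gph}(\X)$ is measurable because $\X$ is a measurable (closed-valued) set-valued mapping \cite[Theorem 14.8]{rockafellar:wets}, and the second factor is measurable by the previous step. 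I would finish by invoking the projection theorem: the projection onto $\Delta^N$ of a Borel subset of a product of Polish spaces is an analytic set, and every analytic set is universally measurable \cite[Section 7.1]{bogachev2}. Hence $\{\omega \mid V(\X(\omega)) > \epsilon\}$ is universally measurable, which by the convention recalled at the start of this section makes the integral $\mathbb{P}^N(\{\omega \mid V(\X(\omega)) > \epsilon\})$ \emph{nearly} well defined — exactly the qualifier in the statement.

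The main obstacle is the supremum over $\X(\omega)$: converting $\sup_{x \in \X(\omega)} v(x) > \epsilon$ into an existential statement forces a projection, and the projection of a Borel set need not be Borel. This is precisely why only universal — rather than Borel — measurability can be claimed, and why the analytic-sets machinery of \cite{bogachev2} is required in place of a naive Fubini argument. A secondary technical point to verify carefully is the normal-integrand property of $g$, since lower semicontinuity in $x$ (rather than full continuity) is all that Standing Assumption \ref{ass:standing} provides.
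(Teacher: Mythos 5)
Your proof is correct in substance and follows the same skeleton as the paper's: joint measurability of $g$ via the normal-integrand property, Fubini/Tonelli to conclude that $x \mapsto V(\{x\})$ is measurable, and then an analytic-sets argument to handle the supremum over $\X(\omega)$. There are two genuine differences. For the first claim you argue directly by preimages of $(-\infty,0]$ under $\delta \mapsto g(x,\delta)$, which is simpler than the paper's route (the paper also funnels this through normal integrands and \cite[Proposition 14.33]{rockafellar:wets}). For the second claim, where the paper simply cites \cite[Theorem 2.17(a), Fact 2.9]{stinchcombe:white:92} to conclude that $\omega \mapsto \sup_{x \in \X(\omega)} V(\{x\})$ is analytic and hence universally measurable, you unpack that citation: you rewrite the event as the projection of $\textup{gph}(\X) \cap \bigl(\Delta^N \times \{x \mid v(x) > \epsilon\}\bigr)$ and invoke the projection theorem for Borel sets in Polish spaces. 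This is precisely the mathematics underlying the cited result, so your version is more self-contained; what it costs is an extra hypothesis, namely closed-valuedness of $\X$, which \cite[Theorem 14.8]{rockafellar:wets} requires in order to pass from measurability of $\X$ to measurability of its graph. A measurable correspondence with non-closed values can have a non-measurable graph (e.g.\ $\X(\omega) = \mathbb{Q} \cup A_\omega$ with $A$ a non-measurable set), and then the conclusion can actually fail, so this hypothesis is doing real work. The theorem as stated says only ``measurable set-valued mapping,'' but every mapping it is applied to in the paper ($\X_M$, a convex hull of finitely many measurable points) is closed-valued, so nothing is lost in practice; still, you should state the assumption rather than leave it parenthetical.

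One correction is needed in your second step: the claim that ``lower semicontinuity of $x \mapsto g(x,\delta)$ together with measurability of $\delta \mapsto g(x,\delta)$ makes $g$ a normal integrand'' is false as stated --- separate lower semicontinuity and measurability do not even guarantee \emph{joint} measurability in general. What rescues the argument, and what the paper explicitly invokes, is the \emph{convexity} of $x \mapsto g(x,\delta)$ from Standing Assumption \ref{ass:standing}: \cite[Proposition 14.39]{rockafellar:wets} is a statement about lsc \emph{convex} integrands (continuity in $x$, i.e.\ the Carath\'eodory case, would also suffice, but that is not what is assumed). You flagged this yourself as ``a secondary technical point to verify carefully''; the verification is exactly that convexity must be invoked, so make it explicit in the proof rather than leaving it open.
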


\begin{proof}
From Standing Assumption \ref{ass:standing}, we have that $g$ is a lower semicontinuous convex integrand, and hence a normal integrand \cite[Proposition 14.39]{rockafellar:wets}. Therefore, for all $x \in \mc{X}$, the set $\{ \delta \in \Delta \mid g(x,\delta) \leq 0 \}$ is measurable \cite[Proposition 14.33]{rockafellar:wets} and in turn the probabilistic measure $\mathbb{P}\left( \left\{ \delta \in \Delta \mid g(x,\delta)\leq 0 \right\} \right) $ is well defined.

For the second statement, we show that, for all set-valued measurable mappings $\X$, the mapping $\textstyle \omega \mapsto \sup_{ x \in \X(\omega)} V( \{ x \} )$ is universally measurable.
Since $g$ is a normal integrand, for any finite non-negative measure $\mu$ on $\mc{X} \subseteq \R^n$, we have that $g$ is jointly $\left(\mathbb{P} \otimes \mu\right)$-measurable \cite[Corollary 14.34]{rockafellar:wets}. Indeed, the set $\mc{A} := \{ (x,\delta) \in \mc{X} \times \Delta \mid g(x,\delta) \leq 0 \}$ is $\left(\mathbb{P} \otimes \mu\right)$-measurable \cite[Proof of Corollary 14.34]{rockafellar:wets}, and in turn the mapping $(x,\delta) \mapsto \mathds{1}_{\mc{A}}(x,\delta)$ is measurable.
It then follows from Fubini's Theorem \cite[Theorem 8.8 (a)]{rudin} that the mapping $x \mapsto \int_{\Delta} \mathds{1}_{\mc{A}}(x,\delta) \mathbb{P}(d \delta) = 	\mathbb{P}\left( \{ \delta \in \Delta \mid g(x,\delta) \leq 0 \}\right)$ is measurable, and in turn $x \mapsto V(\{x \}) = 1 - \mathbb{P}\left( \{ \delta \in \Delta \mid g(x,\delta) \leq 0 \}\right)$ is measurable as well \cite[Proposition 14.11 (c)]{rockafellar:wets}.
Since $V$ is measurable, it follows from \cite[Theorem 2.17 (a)]{stinchcombe:white:92} that $\omega \mapsto \sup_{ x \in \X(\omega) } V( \{ x\} ) $ is analytic and hence universally measurable \cite[Fact 2.9]{stinchcombe:white:92}.
\end{proof}

\begin{remark} \label{rem:near-measurability}
According to the proof of Theorem \ref{th:integrals-well-defined}, the mapping $\omega \mapsto \bar{V}( \X(\omega))$ is not measurable, but only nearly measurable. However, near measurability is sufficient for the purposes of most applications, for instance in game-theory and econometrics, see \cite{stinchcombe:white:92} and the references therein.


Notice however that the upper closure of $V$, i.e. $\bar{V}(\{x\}) := \lim\sup_{ y \rightarrow x } V( \{y\} )$, is such that the integral $\mathbb{P}^N\left( \left\{ \omega \in \Delta^N \mid \bar{V}( \X(\omega) ) > \epsilon \right\} \right)$ is well defined. In fact, since $\bar{V}$ is upper semicontinuous by construction, $-\bar{V}$ is an autonomous, lower semicontinuous, normal integrand \cite[Example 14.30]{rockafellar:wets}. Then it follows from \cite[Example 14.32, Theorem 14.37]{rockafellar:wets} that the mapping $\omega \mapsto \bar{V}( \X(\omega)) := \sup_{ x \in \X(\omega) } \bar{V}(\{x\})$ is measurable.
\hfill $\qed$
\end{remark}

We can now show the following result on the measurability of optimal value and of optimal solutions of ${\textsc{SP}}[\omega]$ in \eqref{eq:SP}, which means that they actually are random variables.
\begin{theorem}\label{th:measurability}
Let $J^\star: \Delta^N \rightarrow \R$ and $\mc{X}^\star: \Delta^N \rightrightarrows \mc{X}$ be the mappings such that, for all $\omega \in \Delta^N$,
$J^\star(\omega)$ and $\mc{X}^\star(\omega)$ are, respectively, the optimal value and the set of optimizers of $\textsc{SP}[\omega]$ in \eqref{eq:SP}. Then $J^\star$ is measurable, and $\mc{X}^\star$ is closed-valued and measurable. Moreover, $\mc{X}^\star$ admits a measurable selection, i.e., there exists a measurable mapping $x^\star: \Delta^N \rightarrow \mc{X}$ such that $x^\star(\omega) \in \mc{X}^\star(\omega)$ for all $\omega \in \Delta^N$.
\qed
\end{theorem}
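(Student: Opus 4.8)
The plan is to rewrite $\textsc{SP}[\omega]$ as the minimization of a single \emph{normal integrand} and then read off all three claims from the parametric-minimization calculus of \cite{rockafellar:wets}, reusing the tools already invoked in the proof of Theorem \ref{th:integrals-well-defined}. Writing $\omega = ( \bar\delta^{(1)}, \dots, \bar\delta^{(N)} ) \in \Delta^N$, I would first bundle the $N$ sampled constraints into the function $G(x,\omega) := \max_{i \in \Z{[1,N]}} g( x, \bar\delta^{(i)} )$. For each fixed $\omega$, the map $x \mapsto G(x,\omega)$ is convex and lower semicontinuous as a finite maximum of convex lower semicontinuous functions (Standing Assumption \ref{ass:standing}); for each fixed $x$, the map $\omega \mapsto G(x,\omega)$ is measurable as a finite maximum of the measurable maps $\omega \mapsto g(x,\bar\delta^{(i)})$. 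Hence, exactly as in the proof of Theorem \ref{th:integrals-well-defined}, $G$ is a convex normal integrand \cite[Proposition 14.39]{rockafellar:wets}, and so is $G + \chi_{\mc{X}}$, where $\chi_{\mc{X}}$ denotes the $\{0,+\infty\}$-valued indicator of the closed convex set $\mc{X}$.

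First I would show that the feasible-set mapping is measurable and compact-valued. The feasible set of $\textsc{SP}[\omega]$ is precisely the zero-sublevel-set mapping $F(\omega) := \{ x \in \R^n \mid G(x,\omega) + \chi_{\mc{X}}(x) \leq 0 \}$, which, being the level-set mapping of a normal integrand, is measurable and closed-valued \cite[Proposition 14.33]{rockafellar:wets}; since $F(\omega) \subseteq \mc{X}$ with $\mc{X}$ compact, $F$ is in fact compact-valued. Collapsing the $N$ constraints into the single maximum $G$ is deliberate: it routes the argument through one level-set mapping and sidesteps the one genuinely delicate point, namely the measurability of an intersection of several constraint mappings.

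Next I would assemble the objective into the integrand $f(x,\omega) := J(x) + \chi_{F(\omega)}(x)$, so that $\textsc{SP}[\omega]$ becomes $\min_{x} f(x,\omega)$, with $J^\star(\omega) = \inf_{x} f(x,\omega)$ and $\mc{X}^\star(\omega)$ equal to the set of minimizers of $f(\cdot,\omega)$. Here $J$ is autonomous and lower semicontinuous, hence a normal integrand \cite[Example 14.30]{rockafellar:wets}, while $\chi_{F(\omega)}$ is the indicator of a closed-valued measurable mapping, hence also a normal integrand \cite[Example 14.32]{rockafellar:wets}; their sum involves no $\infty - \infty$ and is therefore again a normal integrand by the stability of normal integrands under addition \cite{rockafellar:wets}. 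Applying the inf-projection theorem \cite[Theorem 14.37]{rockafellar:wets} then delivers in one stroke that $J^\star$ is measurable and that $\mc{X}^\star$ is a measurable, closed-valued mapping. That $\mc{X}^\star(\omega)$ is moreover nonempty (and compact) whenever $F(\omega) \neq \varnothing$ follows from the Weierstrass theorem, using compactness of $\mc{X}$ together with lower semicontinuity of $J$.

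Finally I would produce the measurable selection. The effective domain $D := \{ \omega \in \Delta^N \mid \mc{X}^\star(\omega) \neq \varnothing \} = \{ \omega \in \Delta^N \mid J^\star(\omega) < +\infty \}$ is measurable because $J^\star$ is. On $D$ the mapping $\mc{X}^\star$ is nonempty-valued, closed-valued and measurable, so the Kuratowski--Ryll-Nardzewski measurable selection theorem \cite[Theorem 14.5, Corollary 14.6]{rockafellar:wets} yields a measurable $x^\star$ with $x^\star(\omega) \in \mc{X}^\star(\omega)$ for all $\omega \in D$; on $\Delta^N \setminus D$, where $\textsc{SP}[\omega]$ is infeasible and $x^\star(\omega) := \varnothing$ by the paper's convention, the selection is extended by an arbitrary constant, which preserves measurability. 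I expect the main obstacle to lie entirely in the first two paragraphs, i.e.\ in certifying the normal-integrand structure of $G$ and thereby the measurability and compact-valuedness of $F$; once the single normal integrand $f$ is in hand, both the measurability of the pair $(J^\star, \mc{X}^\star)$ and the existence of a measurable selection follow mechanically from \cite[Theorem 14.37, Corollary 14.6]{rockafellar:wets}.
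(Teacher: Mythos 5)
Your proposal is correct and follows essentially the same route as the paper's own proof: both build the normal integrand $\max_i g(x,\bar\delta^{(i)})$ (you via \cite[Proposition 14.39]{rockafellar:wets} applied directly to the max, the paper via per-sample integrands combined with \cite[Proposition 14.44 (a)]{rockafellar:wets}), obtain the feasible-set mapping from \cite[Proposition 14.33]{rockafellar:wets}, form $J$ plus the indicator integrand of that mapping (\cite[Examples 14.30, 14.32]{rockafellar:wets}), and conclude with the inf-projection result \cite[Theorem 14.37]{rockafellar:wets} and the measurable selection from \cite[Corollary 14.6]{rockafellar:wets}. Your explicit Weierstrass argument for nonemptiness and the constant extension of the selection on the infeasible set are minor refinements of, not departures from, the paper's argument.
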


\begin{proof}
Since the mapping $x \mapsto g(x,\delta)$ is convex and lower semicontinuous for each $\delta$, and the mapping $\delta \mapsto g(x,\delta)$ is measurable for each $x$, we have that $g$ is a lower semicontinuous integrand and hence a normal integrand \cite[Definition 14.27, Proposition 14.39]{rockafellar:wets}.
For all $i \in \Z[1,N]$, we consider the lower semicontinuous convex, and hence normal \cite[Proposition 14.39]{rockafellar:wets}, integrand $g_i: \mc{X} \times \Delta^N \rightarrow \R$ defined as $g_i(x,\omega) = g_i\left(x,(\delta_1, \delta_2, ..., \delta_N) \right) := g(x,\delta_i)$. 
Then we consider the mapping $\bar{g}: \mc{X} \times \Delta^N \rightarrow \R$ defined as $\bar{g}(x,\omega) := \max_{i \in \Z[1,N]} g_i(x,\omega)$, which is a normal integrand because the pointwise maximum of the normal integrands $g_1$, $g_2$, ..., $g_N$ \cite[Proposition 14.44 (a)]{rockafellar:wets}.
We now consider the set-valued mapping $\mc{C}: \Delta^n \rightrightarrows \mc{X}$ defined as $\mc{C}(\omega) := \left\{ x \in \mc{X} \mid \bar{g}(x,\omega) \leq 0 \right\}$. Since $\bar{g}$ is a normal integrand, it follows from \cite[Proposition 14.33]{rockafellar:wets} that the level-set mapping $\mc{C}$ is closed-valued and measurable.
Thus, we can define the indicator integrand $\mathds{1}_{\mc{C}}: \mc{X} \times \Delta^N \rightarrow \{0, \infty\}$ as $\mathds{1}_{\mc{C}}(x,\omega) = \mathds{1}_{ \mc{C}(\omega) }(x) := \{ 0 \textup{ if }  x \in \mc{C}(\omega), \ \infty \textup{ otherwise} \}.$
Since $\mc{C}$ is closed-valued and measurable, the mapping $\mathds{1}_{\mc{C}}$ is a normal integrand \cite[Example 14.32]{rockafellar:wets}.
Now, the problem $\textup{SP}[\omega]$ in \eqref{eq:SP} can be written as $\min_{x \in \mc{X}} c^\top x$ sub. to $x \in \mc{C}(\omega)$, which is equivalent \cite[Section 1.A]{rockafellar:wets} to $\min_{ x \in \R^n } J(x) + \mathds{1}_{\mc{C}}(x,\omega)$. We notice that the mapping $(x,\omega) \mapsto \varphi(x,\omega) := J(x) + \mathds{1}_{\mc{C}}(x,\omega)$ is a normal integrand as $J$ is lower semicontinuous \cite[Example 14.30, Example 14.32, Proposition 14.44 (c)]{rockafellar:wets}.
It finally follows from \cite[Theorem 14.37]{rockafellar:wets} that the optimal value mapping $\omega \mapsto J^\star(\omega) := \inf_{x \in \R^n} \varphi(x,\omega) $ is measurable; also, the set-valued mapping 
$\omega \mapsto \mc{X}^\star(\omega) := \arg\min_{x \in \R^n} \varphi(x,\omega) $ is closed-valued and measurable. Moreover, the set $\left\{\omega \in \Delta^N \mid \mc{X}^\star(\omega) \neq \varnothing \right\}$ is measurable, and it is possible for each $\omega \in \Delta^N$ to select a minimizing point $x^\star(\omega)$ in such a manner that the mapping $\omega \mapsto x^\star(\omega)$ is measurable \cite[Corollary 14.6, Theorem 14.37]{rockafellar:wets}.
\end{proof}

In the following result, we show that if the set of optimizers $\mc{X}^\star$ of $\textup{SP}$ in \eqref{eq:SP} is not a singleton, convex and lower semicontinuous tie-break rules $\varphi$ are sufficient to guarantee measurability of the optimizer $x^\star$ (whenever it is unique). Applying a tie-break rule $\varphi$ basically means to solve the following program, where $J^\star(\omega)$ is the optimal value of $\textup{SP}[\omega]$ in \eqref{eq:SP}.
\begin{equation}\label{eq:SP-tie}
{\textsc{SP}}_{\textup{t-b}}[\omega]: \ \left\{
\begin{array}{l}
\displaystyle \min_{ x \in \mc{X} } \ \varphi(x) \\
\begin{array}{ll}
\textup{sub. to:} & g\left( x, \delta^i \right) \leq 0 \ \ \forall i \in \Z[1,N]\\
 & J(x) \leq J^\star(\omega) \\
\end{array}
\end{array}
\right.
\end{equation}

\begin{corollary}
Let $\varphi: \R^n \rightarrow \R$ be a convex and lower semicontinuous function. Let $J^\star: \Delta^N \rightarrow \R$ and $x_{\textup{t-b}}^\star: \Delta^N \rightarrow \mc{X}$ be such that, for all $\omega \in \Delta^N$, $J^\star(\omega)$ and $x_{\textup{t-b}}^\star(\omega)$ are, respectively, the optimal value of $\textup{SP}[\omega]$ in \eqref{eq:SP} and the unique optimal solution of ${\textsc{SP}}_{\textup{t-b}}[\omega]$ in \eqref{eq:SP-tie}. Then $x_{\textup{t-b}}^\star$ is measurable.
\qed
\end{corollary}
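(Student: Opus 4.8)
The plan is to reuse the normal-integrand machinery of the proof of Theorem~\ref{th:measurability}, augmenting it to account for the tie-break objective $\varphi$ and for the extra coupling constraint $J(x) \leq J^\star(\omega)$ appearing in $\textsc{SP}_{\textup{t-b}}[\omega]$ of \eqref{eq:SP-tie}. Recall that Theorem~\ref{th:measurability} already supplies two facts I would take for granted: the optimal-value mapping $J^\star$ is measurable, and $\bar{g}(x,\omega) := \max_{i \in \Z[1,N]} g(x,\delta_i)$ is a normal integrand whose level-set mapping $\mc{C}(\omega) := \{ x \in \mc{X} \mid \bar{g}(x,\omega) \leq 0 \}$ is closed-valued and measurable. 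The overall strategy is to rewrite $\textsc{SP}_{\textup{t-b}}[\omega]$ as the unconstrained minimization of a single normal integrand over $\omega$, so that \cite[Theorem 14.37]{rockafellar:wets} and \cite[Corollary 14.6]{rockafellar:wets} apply verbatim and deliver a measurable optimizer selection.

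First I would encode the new feasibility requirement $J(x) \leq J^\star(\omega)$ as the sub-zero-level set of a normal integrand. Since $J$ is lower semicontinuous, it is an autonomous, hence normal, integrand \cite[Example 14.30]{rockafellar:wets}; meanwhile $(x,\omega) \mapsto -J^\star(\omega)$ is constant (in particular continuous) in $x$ and measurable in $\omega$ by Theorem~\ref{th:measurability}, hence a Carath\'eodory and therefore normal integrand, so the sum $(x,\omega) \mapsto J(x) - J^\star(\omega)$ is a normal integrand \cite[Proposition 14.44 (c)]{rockafellar:wets}. Taking the pointwise maximum $\psi(x,\omega) := \max\{ \bar{g}(x,\omega),\, J(x) - J^\star(\omega) \}$ preserves normality \cite[Proposition 14.44 (a)]{rockafellar:wets}, and consequently the feasible-set mapping of the tie-break program, $\mc{D}(\omega) := \{ x \in \mc{X} \mid \psi(x,\omega) \leq 0 \}$, is closed-valued and measurable \cite[Proposition 14.33]{rockafellar:wets}.

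Next I would handle the tie-break objective. Because $\varphi$ is convex and lower semicontinuous, it is an autonomous normal integrand \cite[Example 14.30]{rockafellar:wets}, and because $\mc{D}$ is closed-valued and measurable, its indicator $\mathds{1}_{\mc{D}}(x,\omega) := \mathds{1}_{\mc{D}(\omega)}(x)$ is a normal integrand \cite[Example 14.32]{rockafellar:wets}. Their sum $\Psi(x,\omega) := \varphi(x) + \mathds{1}_{\mc{D}}(x,\omega)$ is again a normal integrand \cite[Proposition 14.44 (c)]{rockafellar:wets}, and $\textsc{SP}_{\textup{t-b}}[\omega]$ is exactly $\min_{x \in \R^n} \Psi(x,\omega)$. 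By \cite[Theorem 14.37]{rockafellar:wets} the argmin mapping $\omega \mapsto \arg\min_{x} \Psi(x,\omega)$ is closed-valued and measurable and admits a measurable selection \cite[Corollary 14.6]{rockafellar:wets}; since this argmin is the singleton $\{ x_{\textup{t-b}}^\star(\omega) \}$ by the uniqueness hypothesis, the selection coincides with $x_{\textup{t-b}}^\star$, which is therefore measurable.

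The only genuinely new difficulty relative to Theorem~\ref{th:measurability} is the $\omega$-dependent right-hand side $J^\star(\omega)$ in the coupling constraint: one must certify that intersecting the scenario feasible set with $\{ x \mid J(x) \leq J^\star(\omega) \}$ does not destroy measurability of the feasible-set mapping. This is precisely the step where measurability of $J^\star$ (already established) is indispensable, together with the recognition that $(x,\omega) \mapsto -J^\star(\omega)$ is a Carath\'eodory integrand, so that $J(x) - J^\star(\omega)$ qualifies as a normal integrand. Once this point is settled, everything downstream is a routine application of the normal-integrand calculus of \cite{rockafellar:wets}, identical in spirit to the proof of Theorem~\ref{th:measurability}.
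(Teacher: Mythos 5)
Your proposal is correct and follows essentially the same route as the paper: the paper likewise forms $\bar{\bar{g}}(x,\omega) := \max\{\bar{g}(x,\omega),\, J(x) - J^\star(\omega)\}$, certifies it is a normal integrand via the measurability of $J^\star$ (Carath\'eodory integrand), lower semicontinuity of $J$, and \cite[Proposition 14.44]{rockafellar:wets}, and then reruns the proof of Theorem~\ref{th:measurability} with $\bar{\bar{g}}$ in place of $\bar{g}$. Your $\psi$ is exactly the paper's $\bar{\bar{g}}$, and your explicit unwinding of the "follow the proof of Theorem~\ref{th:measurability}" step (indicator of the level-set mapping, objective $\varphi$, \cite[Theorem 14.37, Corollary 14.6]{rockafellar:wets}, uniqueness giving the selection) is precisely what that reference entails.
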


\begin{proof}
We first define the normal integrand \\
$\bar{g}(x,\omega) = \bar{g}(x,(\delta_1, \delta_2, ..., \delta_N)) := \max_{ i \in \Z[1,N] } g_i(x,\omega) = \max_{ i \in \Z[1,N] } g( x, \delta_i )$, as in the proof of Theorem \ref{th:measurability}. Since $J$ is lower semicontinuous, it is an autonomous integrand and hence a normal integrand \cite[Example 14.30]{rockafellar:wets}; moreover, since $J^\star$ is measurable from Theorem \ref{th:measurability}, it is a (Carath\'eodory) normal integrand as well \cite[Example 14.29]{rockafellar:wets}. Therefore also the mapping $(x,\omega) \mapsto J(x) - J^\star(\omega)$ is a normal integrand \cite[Proposition 14.44 (c)]{rockafellar:wets}, and in turn, the mapping $\bar{\bar{g}}(x,\omega) := \max\{ \bar{g}(x,\omega), J(x) - J^\star(\omega) \}$ is a normal integrand as well.
Then, we can just follow the proof of Theorem \ref{th:measurability} with $\bar{\bar{g}}$ in place of $\bar{g}$.
\end{proof}

\begin{remark} \label{rem:unique-measurable}
In \eqref{eq:SP-tie}, if $J$ is convex and $\varphi$ is strictly convex then an optimal solution $x_{\textup{t-b}}^\star(\omega)$ of ${\textsc{SP}}_{\textup{t-b}}[\omega]$ is the unique optimal solution.
\qed
\end{remark}

We finally mention that the convex hull of measurable singleton mappings is measurable as well, so that $\mathbb{P}^N \left( \left\{\omega \in \Delta^N \mid V( \X_M(\omega) ) > \epsilon \right\} \right) $ is well defined from Theorem \ref{th:integrals-well-defined}.
\begin{corollary}
The set-valued mapping $\X_M$ in \eqref{eq:XM} is measurable.
\qed
\end{corollary}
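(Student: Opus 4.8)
The plan is to reduce the measurability of the convex-hull mapping $\X_M$ to the measurability of finitely many ordinary vector-valued functions by exploiting a Castaing-type representation. First I would recall, via Theorem \ref{th:measurability}, that each optimizer mapping $x_k^\star : \Delta^N \to \mc{X}$ is measurable on the (measurable) set where $\textsc{SP}_k[\omega]$ is feasible; restricting to the measurable subset on which all $M$ problems are feasible, I may treat each $x_k^\star$ as a measurable function. The key object is the barycentric map $F(\omega,\lambda) := \sum_{k=1}^M \lambda_k\, x_k^\star(\omega)$, defined on $\Delta^N \times \Lambda$, where $\Lambda := \{ \lambda \in \R^M \mid \lambda_k \geq 0, \ \sum_{k=1}^M \lambda_k = 1 \}$ is the (compact) unit simplex, so that $\X_M(\omega) = \{ F(\omega,\lambda) \mid \lambda \in \Lambda \}$.

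Next I would fix a countable dense subset $\{ \lambda^{(i)} \}_{i \in \N} \subseteq \Lambda$ and set $v_i(\omega) := \sum_{k=1}^M \lambda^{(i)}_k\, x_k^\star(\omega)$. Each $v_i$ is a finite linear combination, with constant coefficients, of the measurable functions $x_k^\star$, hence measurable by \cite[Proposition 14.11]{rockafellar:wets}. For each fixed $\omega$ the map $\lambda \mapsto F(\omega,\lambda)$ is continuous and $\Lambda$ is compact, so $\X_M(\omega) = F(\omega,\Lambda)$ is compact; moreover, since $\{ \lambda^{(i)} \}$ is dense in $\Lambda$ and $F(\omega,\cdot)$ is continuous, $\{ v_i(\omega) \}_{i\in\N} = F(\omega, \{ \lambda^{(i)} \})$ is dense in $F(\omega,\Lambda)$. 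Therefore $\X_M(\omega) = \textup{cl}\,\{ v_i(\omega) \mid i \in \N \}$ for every $\omega$ in the relevant domain.

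Finally I would invoke the characterization of measurability through a Castaing representation \cite[Theorem 14.5]{rockafellar:wets}: a closed-valued mapping that coincides pointwise with the closure of a countable collection of measurable functions is itself measurable. Applying this to $\X_M(\omega) = \textup{cl}\,\{ v_i(\omega) \}$ yields measurability of $\X_M$, and hence, by Theorem \ref{th:integrals-well-defined}, that $\mathbb{P}^N(\{ \omega \in \Delta^N \mid V(\X_M(\omega)) > \epsilon \})$ is nearly well defined.

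The main obstacle I anticipate is representational rather than computational: the convex-hull operation is not among the operations listed as obviously measurability-preserving, so the crux is to recast $\textup{conv}(\cdot)$ of the finitely many measurable selections as the closure of a single countable family of measurable functions. This hinges on two facts that must be stated carefully — the continuity of the barycentric map in $\lambda$ on the compact simplex, and the fact that a continuous image of a dense subset is dense in the image — together with a clean handling of the convention $x_k^\star(\omega) := \varnothing$ on the measurable set where some $\textsc{SP}_k[\omega]$ is infeasible.
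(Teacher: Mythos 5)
Your proof is correct, but it takes a different route from the paper's at the key step. Both arguments start identically: the optimizer mappings $x_1^\star,\dots,x_M^\star$ are measurable by Theorem \ref{th:measurability} (with uniqueness via Remark \ref{rem:unique-measurable}). The paper then finishes in one line by citing the calculus of measurable set-valued mappings in \cite{rockafellar:wets}: the union of the singleton mappings $\omega \mapsto \{x_k^\star(\omega)\}$ is measurable (Proposition 14.11(b)), and the convex hull of a measurable mapping is again measurable (Example 14.12(a)). You instead \emph{re-derive} that convex-hull-preservation fact in the finite-points case: you parametrize $\X_M(\omega)$ by the barycentric map over the compact simplex $\Lambda$, take a countable dense set of weights to obtain countably many measurable selections $v_i$, observe that continuity of $F(\omega,\cdot)$ makes $\{v_i(\omega)\}$ dense in the compact set $\X_M(\omega)$, and invoke the Castaing-representation characterization (Theorem 14.5 in \cite{rockafellar:wets}). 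What each approach buys: the paper's proof is two lines and leans entirely on the existing measurability calculus; yours is self-contained, makes explicit that $\X_M$ is compact-valued, and exposes exactly \emph{why} the convex hull of finitely many measurable points is measurable rather than quoting it. One caveat in your version: restricting to the event where \emph{all} $M$ problems are feasible does not cover the (measurable) events where only a proper subset of the $\textsc{SP}_k$ are feasible, on which $\X_M(\omega)$ is the hull of the surviving optimizers. This is easily repaired—partition $\Delta^N$ into the finitely many measurable cells indexed by the set of feasible indices (each is measurable by Theorem \ref{th:measurability}) and run your simplex argument on each cell, since a mapping measurable on each piece of a countable measurable partition is measurable—whereas the paper's union-of-mappings formulation absorbs empty values silently. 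You correctly flagged this issue; it just needs the partition argument to be fully closed.
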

\begin{proof}
According to Theorem \ref{th:measurability} and Remark \ref{rem:unique-measurable}, the unique optimal solutions $x_1^\star$, $x_2^\star$, ..., $x_M^\star$, respectively of $\textup{SP}_1$, $\textup{SP}_2$, ..., $\textup{SP}_M$, are measurable mappings.
Then the proof directly follows as $\X_M$ is the convex-hull set-valued mapping of a countable union of measurable mappings \cite[Proposition 114.11 (b), Example 14.12 (a)]{rockafellar:wets}.
\end{proof}

\bibliographystyle{IEEEtran}
\bibliography{library}

\end{document}